\newtheorem{theorem}{Theorem}
\newtheorem{problem}{Problem}
\newtheorem{lemma}{Lemma}
\newcommand{\add}[1]{%
    \ifthenelse{\boolean{showchanges}}%
        {\textcolor{blue}{#1}}
        {#1\relax}
}
\definecolor{lime}{HTML}{A6CE39}
\DeclareRobustCommand{\orcidicon}{%
    \begin{tikzpicture}
    \draw[lime, fill=lime] (0,0) 
    circle [radius=0.16] 
    node[white] {{\fontfamily{qag}\selectfont \tiny ID}};    \draw[white, fill=white] (-0.0625,0.095) 
    circle [radius=0.007];    \end{tikzpicture}
    \hspace{-2mm}}
\xdef\csname orcid\x\endcsname{\noexpand\href{https://orcid.org/\csname orcidauthor\x\endcsname}{\noexpand\orcidicon}}
\begin{document}

\title{Graph Neural Network-Based Multicast Routing for On-Demand Streaming Services in 6G Networks}
\author{Xiucheng Wang\orcidA{},~\IEEEmembership{Graduate Student Member,~IEEE,}
        Zien Wang\orcidB{},
        Nan Cheng\orcidC{},~\IEEEmembership{Senior Member,~IEEE,} \\
        Wenchao Xu\orcidD{},~\IEEEmembership{Member,~IEEE,}
        Wei Quan\orcidE{},~\IEEEmembership{Senior Member,~IEEE,}
        and Xuemin (Sherman) Shen\orcidF{},~\IEEEmembership{Fellow,~IEEE}
\thanks{
\par Xiucheng Wang, Zien Wang and Nan Cheng are with School of Telecommunications Engineering, Xidian University, Xi’an, 710071, China (e-mail: \{xcwang\_1, zewang\_1\}@stu.xidian.edu.cn; dr.nan.cheng@ieee.org). \textit{(Xiucheng Wang and Zien Wang contributed equally to this work.)(Corresponding author: Nan Cheng.)}
\par Wenchao Xu is with Division of Integrative Systems and Design, The Hong Kong University of Science and Technology, 999077, Hong Kong SAR (e-mail: wenchaoxu@ust.hk).
\par Quan Wei is with School of Electronic and Information Engineering, Beijing Jiaotong University, Beijing, 100044, China(e-mail: dr.wei.quan@ieee.org).
\par Xuemin (Sherman) Shen is with the Department of Electrical and Computer Engineering, University of Waterloo, Waterloo, N2L 3G1, Canada (e-mail: sshen@uwaterloo.ca).
}        
}

    \maketitle

\IEEEdisplaynontitleabstractindextext

\IEEEpeerreviewmaketitle

\begin{abstract}
The increase of bandwidth-intensive applications in sixth-generation (6G) wireless networks, such as real-time volumetric streaming, and multi-sensory extended reality, demands intelligent multicast routing solutions capable of delivering differentiated quality-of-service (QoS) at scale. Traditional shortest-path and multicast routing algorithms are either computationally prohibitive or structurally rigid, and they often fail to support heterogeneous user demands, leading to suboptimal resource utilization. Neural network-based approaches, while offering improved inference speed, typically lack topological generalization and scalability. To address these limitations, this paper presents a graph neural network (GNN)-based multicast routing framework that jointly minimizes total transmission cost and supports user-specific video quality requirements. The routing problem is formulated as a constrained minimum-flow optimization task, and a reinforcement learning algorithm is developed to sequentially construct efficient multicast trees by reusing paths and adapting to network dynamics. A graph attention network (GAT) is employed as the encoder to extract context-aware node embeddings, while a long short-term memory (LSTM) module models the sequential dependencies in routing decisions. Extensive simulations demonstrate that the proposed method closely approximates optimal dynamic programming-based solutions while significantly reducing computational complexity. The results also confirm strong generalization to large-scale and dynamic network topologies, highlighting the method’s potential for real-time deployment in 6G multimedia delivery scenarios. \added{Code is available at \url{https://github.com/UNIC-Lab/GNN-Routing}.}
\end{abstract}

\begin{IEEEkeywords}
Graph neural networks, multicast routing, reinforcement learning, quality-of-service, 6G networks, video streaming, scalability.

\end{IEEEkeywords}

\section{Introduction}
The advent of sixth-generation (6G) wireless networks is catalyzing a profound transformation in multimedia services, enabling a broad spectrum of bandwidth-intensive and latency-sensitive applications. Emerging use cases such as holographic telepresence, real-time volumetric video streaming, and multi-sensory extended reality (XR) are redefining user expectations and imposing unprecedented demands on network infrastructures. For instance, holographic telepresence services require traffic densities on the order of 1–10 Tbps/km² \cite{Menglan}, while real-time volumetric video applications may necessitate peak data rates exceeding 100 Gbps per user \cite{DBLP:journals/corr/abs-1906-00741}. Similarly, XR applications involve synchronized audiovisual and haptic feedback, further intensifying the quality-of-service (QoS) requirements in terms of reliability, latency, and throughput \cite{Shen}\cite{YHE}, particularly in mobile and densely populated environments. This paradigm shift is further illustrated by the explosive growth in global multimedia traffic. However, such growth also brings substantial technical challenges. First, the network must support heterogeneous QoS demands, as multiple users in the same session may request vastly different resolutions—from 360p for low-end devices to 8K for immersive viewing experiences \cite{11017513}. Second, to ensure economic viability and environmental sustainability, 6G infrastructures must minimize redundant transmissions and optimize flow efficiency to reduce energy consumption and system overheads, particularly in multicast and multi-hop scenarios \cite{radiodiff}, \cite{Chukhno}. The dual imperatives of service differentiation and flow minimization introduce significant complexity to video transmission routing, especially under the dynamic and large-scale conditions envisioned for 6G. This necessitates the development of intelligent, flexible, and scalable routing mechanisms that can simultaneously meet personalized user demands and minimize the global transmission cost.

Despite decades of advancements in network routing, conventional algorithms struggle to meet the multifaceted demands of video transmission in 6G environments. Classic shortest-path routing algorithms, such as Dijkstra’s or Bellman-Ford, are effective in identifying low-cost routes between a single source and a single destination in polynomial time. However, when extended to scenarios involving multiple concurrent users requesting the same video stream, these algorithms reveal significant inefficiencies. Specifically, the independent computation of shortest paths for each user fails to exploit path reuse opportunities, resulting in redundant transmissions that inflate the overall network flow and energy consumption \cite{Chen2004Multipath}. This inefficiency becomes particularly critical as live streaming and interactive video applications continue to proliferate, with the global live streaming market alone expected to surpass USD 3.7 trillion by 2030 \cite{latreche2025applicationsenvisagednewgeneration}. To address the path reuse problem, traditional routing solutions often adopt multicast routing algorithms that construct a shared distribution tree from the source to all destination nodes. Techniques based on dynamic programming or Steiner tree formulations aim to minimize the aggregate transmission cost by finding a weighted minimum-cost tree. While theoretically optimal, these solutions are NP-hard\cite{Messmer2010Real-Time}, and their computational complexity renders them impractical for real-time inference in large-scale and highly dynamic 6G networks\cite{Chen2012On}. Moreover, their reliance on global topology knowledge and iterative computation makes them unsuitable for distributed and latency-constrained edge environments\cite{Chi2018Live}\cite{YHE2}. More critically, these traditional multicast algorithms generally assume homogeneous service demands, treating all users as requesting the same quality of service. In real-world scenarios, however, users often demand heterogeneous video resolutions—for example, a VR headset may require ultra-high-resolution data while a mobile phone only needs a low-resolution version of the same content. Conventional algorithms typically construct a single delivery path based on the highest QoS requirement, which leads to resource over-provisioning for users with lower requirements. This not only wastes bandwidth but also diminishes overall system efficiency, especially in resource-constrained settings\cite{Li2020Jointly}\cite{Zou2011Prioritized}. As 6G networks evolve toward service-aware and resource-efficient architectures, the limitations of traditional routing algorithms become more pronounced. There is a clear need for novel routing paradigms that are capable of accommodating user diversity, topological dynamics, and computational tractability—a combination that traditional approaches are fundamentally ill-equipped to provide.
\begin{figure*}[!t]
    \centering
    \includegraphics[width=0.92\linewidth]{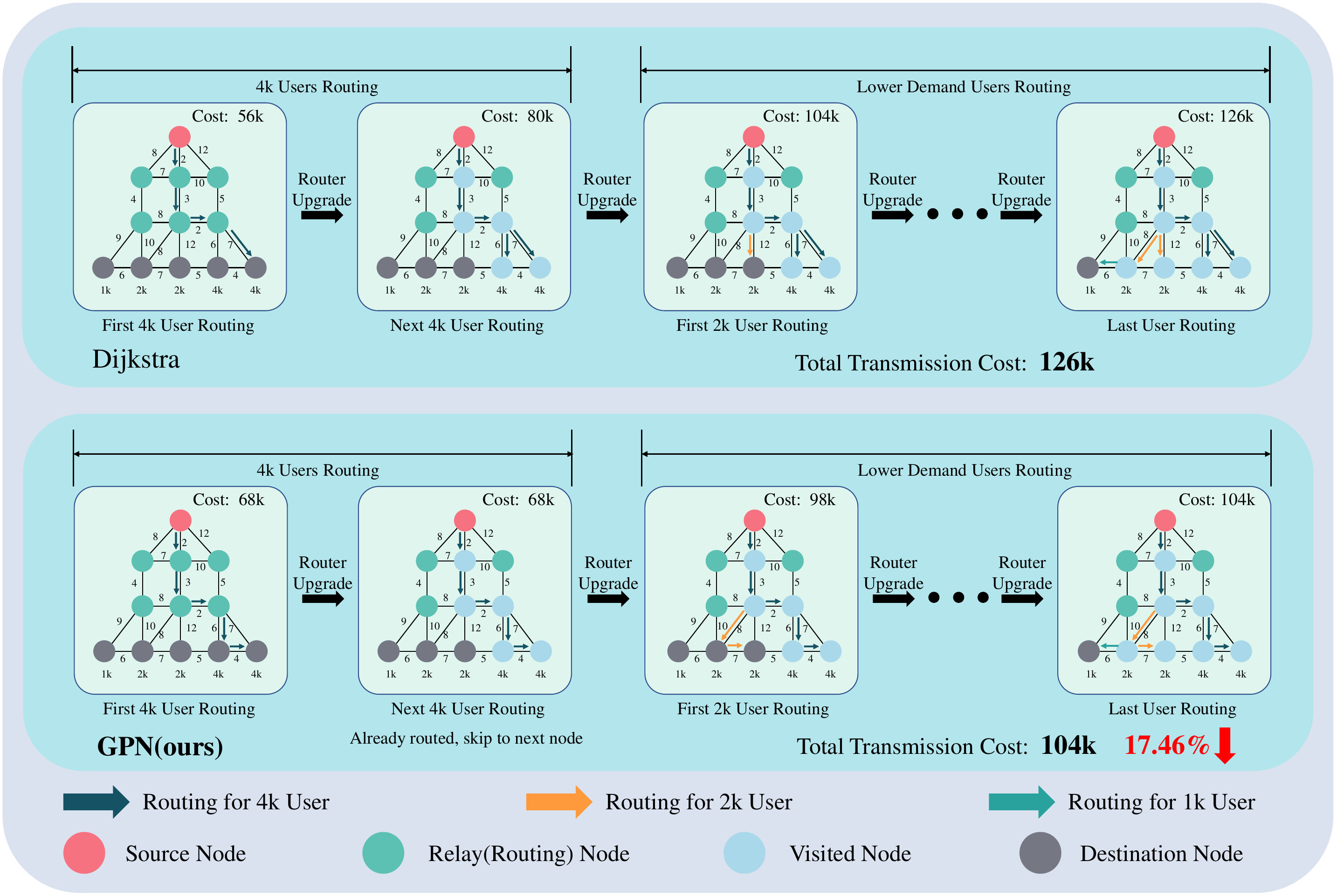}
    \caption{Routing path comparison example under sequential user arrivals. Each row illustrates the step-by-step routing process of a different algorithm (Dijkstra vs. GPN(ours)), with the total transmission cost accumulated across all user demands.}
    \label{fig:path_compare_smallusers}
\end{figure*}
To overcome the scalability and inference latency issues associated with traditional routing algorithms, researchers have turned to neural network (NN)-based approaches for data-driven routing decision-making. These models aim to approximate optimal routing policies through learning, enabling real-time inference with fixed computational complexity—an appealing feature for highly dynamic 6G environments\cite{10387423}. For instance, multi-layer perceptron (MLP)-based methods have been proposed to map network states directly to routing decisions, achieving constant-time inference once trained. These methods can significantly accelerate the routing process in static or moderately dynamic networks by avoiding combinatorial searches during runtime\cite{Du2020Deep}. Building on this, other studies have incorporated convolutional neural networks (CNNs) to capture the spatial and structural patterns of network topologies. By representing the connection matrix of the network as an image-like structure, CNNs can extract hierarchical features that correlate with efficient routing paths. These models show improved performance over basic MLPs by more effectively learning from the topological layout of the network. Such approaches have demonstrated the feasibility of learning-based routing, especially in simulated or small-scale networks with stable conditions. However, despite their promise, traditional NN-based routing models exhibit critical limitations when deployed in realistic 6G network scenarios. One major bottleneck is their lack of structural scalability. Standard neural architectures, such as MLPs and CNNs, require fixed input and output dimensions, which constrains them to networks with a predetermined number of nodes and links\cite{Langpoklakpam2023Review}. Any change in the number of users or modification in the network topology necessitates retraining the model or re-engineering the architecture, significantly undermining their applicability in dynamic 6G networks where users frequently join, leave, or move. Moreover, these models struggle with generalization across unseen topologies. Since their learning is often based on absolute positional or indexing information, their inference capabilities degrade sharply when presented with a network graph that differs from the training data\cite{Tang2024Is}. This results in poor robustness, especially when the routing model is expected to operate across multiple regions or deployment scenarios. Another key limitation is that many existing NN-based methods are not inherently aware of graph-structured data. Unlike human-designed algorithms that leverage the relational nature of networks, such as trees, paths, and flows, standard neural networks treat the input as unstructured data, failing to fully exploit the relational inductive bias essential for networked systems\cite{Orucu2024Towards}. Consequently, their learning and inference efficiency is suboptimal compared to architectures explicitly designed for structured domains. In light of these issues, there is a growing consensus that routing algorithms for 6G networks must not only be data-driven but also graph-aware, topology-adaptive, and scalable. These requirements point to the need for more advanced learning architectures capable of naturally handling dynamic, large-scale, and heterogeneous network environments.

Motivated by the shortcomings of both traditional and conventional neural network-based routing methods, graph neural networks (GNNs) have emerged as a compelling alternative for routing optimization in 6G multimedia networks. GNNs are inherently designed to operate on graph-structured data, making them well-suited to model the topological dependencies, node interactions, and spatial correlations present in communication networks\cite{Jiang2024Graph}. Unlike MLPs or CNNs, GNNs can flexibly accommodate variations in network size, connectivity, and user count without requiring architectural changes or retraining, thereby offering unprecedented scalability and adaptability\cite{Galmés2021Scaling}. One of the most salient features of GNNs is their message passing mechanism, which allows each node to iteratively exchange and aggregate information from its neighbors. This local information propagation not only mirrors the structure of many distributed routing algorithms but also enables GNNs to learn efficient representations of both global network context and local path preferences\cite{Lent2022Dynamic}. Importantly, message passing is structure-agnostic—the number of users, UAVs, or relay nodes can change dynamically, and the same GNN model can generalize to new topologies\cite{Liu2021Routing}. This contrasts sharply with fixed-structure NNs, which are limited to static, pre-defined graphs. In the context of 6G video transmission, where users may request heterogeneous QoS levels and network topologies evolve in real time, the flexibility and efficiency of GNNs become even more valuable. For instance, GNNs can naturally support differentiated routing paths that reflect user-specific video resolution requirements, while still minimizing overall network cost by leveraging shared routes where appropriate\cite{Changala2024Enhancing}. Additionally, the number of trainable parameters in GNN models is typically far lower than those in fully connected or convolutional networks, enabling lightweight deployment on resource-constrained platforms such as UAVs or edge nodes. To address the unique challenges of on-demand, scalable, and QoS-aware video streaming in 6G networks, this paper proposes a novel GNN-based routing framework. Specifically, we design a routing algorithm that jointly optimizes video stream dissemination paths and network resource efficiency, while supporting user-specific service differentiation. The proposed method maintains linear computational complexity with respect to the number of users and network nodes, making it suitable for deployment in large-scale and rapidly evolving network environments\cite{Guliyev2024D3-GNN}. By leveraging the representational power and structural generalization of GNNs, the proposed approach bridges the gap between routing optimality, inference efficiency, and scalability, providing a foundational step toward intelligent, adaptive, and service-aware video transmission in next-generation wireless systems. The main contribution of this paper are summarized as follows.
\begin{enumerate}
    \item To the best of our knowledge, this work is the first to investigate the problem of multicast routing for live video streaming in 6G networks with differentiated user demands. Specifically, we address how to construct a routing path that connects a single source node to multiple user nodes—each requesting different service qualities—while minimizing the total transmission cost. We formulate this as a minimum-flow optimization problem with inflow constraints, capturing both path reuse and user-specific QoS requirements.
    \item A GNN-based routing framework is proposed to efficiently solve the formulated problem. By integrating graph attention mechanisms and modeling flow reuse, the algorithm achieves linear time complexity with respect to the number of nodes as $\mathcal{O}(n)$. Leveraging the message-passing paradigm of GNNs, the trained model generalizes across arbitrary network topologies without retraining, offering strong scalability and robustness to topological variations.
    \item Extensive simulations are conducted to validate the effectiveness of the proposed method. Results demonstrate that the GNN-based approach achieves performance comparable to exhaustive search methods, while significantly reducing computational overhead. Moreover, the method exhibits excellent scalability to large-scale networks and strong adaptability to dynamic environmental conditions, highlighting its suitability for real-time deployment in 6G multimedia routing scenarios.
\end{enumerate}

\section{Related Works and Preliminary}
\subsection{Graph Neural Networks}
GNNs have emerged as a powerful class of neural models designed to process data represented in graph structures, which are ubiquitous in various domains such as social networks, knowledge graphs, molecular chemistry, and increasingly, wireless communication networks\cite{Das2025Opportunistic}. A graph is formally defined as $\mathcal{G} = (\mathcal{V}, \mathcal{E})$, where $\mathcal{V}$ is the set of $N$ nodes and $\mathcal{E} \subseteq \mathcal{V} \times \mathcal{V}$ is the set of edges. Each node $v_i \in \mathcal{V}$ may be associated with a feature vector $\bm{x}_i \in \mathbb{R}^d$, and the graph topology is typically encoded in an adjacency matrix $\bm{A} \in \mathbb{R}^{N \times N}$. Early developments in GNNs were rooted in spectral graph theory, where convolution operations are defined in the frequency domain by leveraging the graph Laplacian\cite{Stachenfeld2020Graph}\cite{Yang2021Improving}. Given a graph with a symmetric adjacency matrix $\bm{A}$ and a degree matrix $\bm{D}$ (where $D_{ii} = \sum_j A_{ij}$), the normalized graph Laplacian is defined as $\bm{L} = \bm{I} - \bm{D}^{-1/2} \bm{A} \bm{D}^{-1/2}$. Spectral graph convolution can then be expressed via the eigendecomposition of $\bm{L} = \bm{U} \bm{\Lambda} \bm{U}^\top$, where $\bm{U}$ contains the eigenvectors and $\bm{\Lambda}$ is a diagonal matrix of eigenvalues. For a graph signal $\bm{x}$, the convolution with a filter $g_\theta$ is defined as follows.
\begin{align}
    g_\theta * \bm{x} = \bm{U} g_\theta(\bm{\Lambda}) \bm{U}^\top \bm{x},
\end{align}
where $g_\theta(\cdot)$ is a spectral filter function parameterized by learnable weights $\theta$. While elegant, spectral methods require the computation of the full eigendecomposition of the Laplacian, which is computationally prohibitive for large graphs and lacks generalizability to dynamically changing topologies\cite{geisler2024spatiospectralgraphneuralnetworks}.

To address these limitations, message passing graph neural networks (MP-GNNs) have become the dominant paradigm. MP-GNNs operate directly in the spatial domain and are inherently localized, scalable, and topology-independent\cite{Olshevskyi2024Fully}\cite{veličković2018graphattentionnetworks}. The key idea is to iteratively update each node’s feature vector by aggregating information from its neighbors through a two-step process: message computation and node update. At the $k$-th layer, the generic update rule for node $v_i$ is given as follows.
\begin{align}
    \bm{h}_i^{(k)} = \phi^{(k)}\left( \bm{h}_i^{(k-1)}, \varphi^{(k)}\left( \left\{ \bm{h}_j^{(k-1)} \mid j \in \mathcal{N}(i) \right\} \right) \right),
\end{align}
where $\bm{h}_i^{(k)}$ is the hidden representation of node $i$ at layer $k$, $\mathcal{N}(i)$ denotes the set of neighbors of node $i$, and $\varphi^{(k)}(\cdot)$ and $\phi^{(k)}(\cdot)$ are differentiable functions which typically are neural networks. A common instantiation is the Graph Convolutional Network (GCN), where the propagation rule simplifies as follows.
\begin{align}
    \bm{H}^{(k)} = \sigma \left( \tilde{\bm{D}}^{-1/2} \tilde{\bm{A}} \tilde{\bm{D}}^{-1/2} \bm{H}^{(k-1)} \bm{W}^{(k)} \right),
\end{align}
where $\tilde{\bm{A}} = \bm{A} + \bm{I}$ as the adjacency matrix with added self-loops, $\tilde{\bm{D}}$ as its degree matrix, $\bm{W}^{(k)}$ the trainable weight matrix, and $\sigma(\cdot)$ an activation function such as ReLU. This structure allows information to propagate across multiple hops as layers are stacked, thereby enabling long-range dependency modeling. Compared with traditional neural network architectures, GNNs naturally incorporate relational inductive bias through the graph structure, making them particularly effective in domains where interactions among entities are as important as the entities themselves. In wireless communication and routing problems, nodes can represent users, UAVs, or base stations, and edges capture connectivity or flow constraints. Thus, GNNs offer a powerful tool for designing adaptive, distributed, and structure-aware algorithms capable of generalizing across variable network topologies.

\subsection{Multi-casting Routing Methods}
Multicast routing has long been recognized as a critical mechanism in communication networks, enabling efficient point-to-multipoint data dissemination. In traditional wired infrastructure, early approaches were built upon the theory of multicast trees, particularly the Steiner Tree formulation, which minimizes the total transmission cost from a single source to multiple destinations. However, due to its NP-hard complexity, most practical protocols opted for heuristics and approximations\cite{Tshakwanda2024Advancing}. Protocols such as Distance Vector Multicast Routing Protocol (DVMRP), Multicast OSPF (MOSPF), and Protocol Independent Multicast (PIM) emerged as foundational methods in IP multicast routing. DVMRP and PIM Dense Mode relied on a flood-and-prune mechanism, suitable for densely populated receiver scenarios, while MOSPF and PIM Sparse Mode introduced join-based and shared-tree strategies to accommodate sparse group distributions and improve scalability. To support efficient delivery across provider networks, multicast was also adapted to MPLS-based backbones through mechanisms like point-to-multipoint Label Switched Paths (P2MP LSPs) using RSVP-TE or multipoint LDP. More recently, Bit Index Explicit Replication (BIER) has been proposed to eliminate per-group state in the core, offering stateless multicast through bit-vector encoded packet headers. In wireless and mobile ad hoc networks (MANETs), the dynamic topology and lack of fixed infrastructure require new multicast designs. Tree-based protocols, such as MAODV, extended unicast routing methods to multicast delivery, while mesh-based schemes like ODMRP provided greater resilience to mobility-induced link failures. Other protocols, such as AMRIS and CAMP, combined tree and mesh features to improve robustness and efficiency. Vehicular networks (VANETs) and wireless sensor networks (WSNs) introduced further specialization. VANET multicast often adopted geocast techniques using geographic positions and road topology to manage high mobility\cite{Nahar2023SDCast}, whereas WSN multicast prioritized energy efficiency and often leveraged group Steiner tree heuristics under duty-cycle constraints.

With the evolution of programmable networks, software-defined networking (SDN) has provided a centralized architecture to manage multicast more flexibly. SDN controllers can compute optimal or near-optimal multicast trees in real time and adjust them dynamically based on traffic and network conditions. This global view enables traffic engineering, congestion avoidance, and quality-of-service enforcement at scale\cite{Ye2023DHRL-FNMR}. Furthermore, SDN supports innovative approaches such as multi-group shared trees and label-based forwarding, significantly reducing multicast state and improving adaptability\cite{Zhao2022DRL-M4MR}. Parallel to SDN advancements, machine learning and reinforcement learning have emerged as powerful tools for multicast optimization. Deep reinforcement learning agents have been used to construct multicast trees that adapt to changing topologies and traffic demands, often outperforming heuristic methods in throughput and delay\cite{Lu2023Hawkeye}. Genetic algorithms, ant colony optimization, and other metaheuristic techniques have also been applied to address multicast problems with multiple conflicting objectives, such as cost, delay, reliability, and energy consumption\cite{Rathan2021Q-Learning}\cite{Zheng2012An}. These AI-driven approaches enable more intelligent and context-aware routing strategies, particularly in complex or large-scale networks where traditional protocols may struggle. Collectively, the body of research on multicast routing demonstrates a progressive shift from protocol-driven to learning-driven paradigms, spanning both infrastructure-based and ad hoc networks. However, challenges remain in achieving low-latency, QoS-aware, and scalable multicast delivery in the face of increasing network dynamics, service differentiation, and resource constraints, highlighting the need for new models that are both structure-aware and computationally efficient.

\section{System Model and Problem Formulation}\label{sec-3}
In this paper, we consider a generalized minimum-cost flow problem in which a single source node disseminates data to multiple destination nodes over a network, while satisfying heterogeneous flow demands across receivers. The underlying network is modeled as an undirected weighted graph $\mathcal{G} = (\mathcal{V}, \mathcal{E})$, where $\mathcal{V}$ denotes the set of $M$ nodes and $\mathcal{E} \subseteq \mathcal{V} \times \mathcal{V}$ represents the set of bidirectional communication links. Each edge $(i, j) \in \mathcal{E}$ is associated with a non-negative weight $e_{(i,j)}$, representing the unit transmission cost of flow between nodes $i$ and $j$. The node set $\mathcal{V}$ is partitioned into three disjoint subsets: the source node set $\mathcal{V}_{s}$, the set of destination nodes $\mathcal{V}_{d}$, and the set of relay (routing) nodes $\mathcal{V}_{r}$, such that $\mathcal{V} = \mathcal{V}_s \cup \mathcal{V}_d \cup \mathcal{V}_r$. The cardinalities are $|\mathcal{V}_s| = 1$, $|\mathcal{V}_d| = K$, and $|\mathcal{V}| = M$. Let the single source node be denoted by $s \in \mathcal{V}_s$, and the destination nodes be indexed by $k = 1, 2, \dots, K$. We denote the flow variable from node $i$ to node $j$ as $f_{(i,j)} \geq 0$, which represents the amount of flow allocated along edge $(i,j)$. The demand vector is given by $\bm{x} = [x^1, x^2, \dots, x^K]^\top$, where $x^k$ specifies the minimum required inflow at destination node $k \in \mathcal{V}_d$. The decision variables in this problem are the set of flows $f_{(i,j)\in\mathcal{E}}$ that define how the total transmission is distributed across the network links, subject to flow conservation and demand satisfaction constraints.

The problem is formulated as follows.
\begin{problem}\label{p1}
    \begin{align}
        &\min_{\bm{f}} \sum_{i\in \mathcal{V}}\sum_{j\in \mathcal{V}} e_{(i,j)}f_{(i,j)},\label{obj}\\
        &s.t.\;  \max_{j\in\mathcal{V}} f_{(i,j)} \leq \max_{k\in\mathcal{V}} f_{(i,j)}, &&\forall i\in\mathcal{V}/\mathcal{V}_{s},\tag{\ref{obj}a}\label{c1}\\
        &\quad\;\; \max_{j\in \mathcal{V}} f_{(i,j)}\geq \max_{k} x^{k},&&\forall k\in\mathcal{V}_{d} \wedge i \in\mathcal{V}_{s},\tag{\ref{obj}b}\label{c2}\\
        &\quad\;\;x^{j}\leq \max_{i\in \mathcal{V}\setminus\mathcal{V}_{d}} f_{(i,j)}, &&\forall j\in\mathcal{V}_{d} ,\tag{\ref{obj}c}\label{c3}\\
        &\quad\;\; f_{(i,j)} \geq 0, &&\forall i,j \in \mathcal{V},\tag{\ref{obj}d}\label{c4}\\
        &\quad\;\; f_{(i,j)} = 0, &&\forall e_{(i,j)}\notin \mathcal{E},\tag{\ref{obj}e}\label{c5}
    \end{align}
\end{problem}
The objective in \eqref{obj} is to minimize the total flow cost across all edges of the network. Constraint \eqref{c1} ensures flow conservation at all intermediate and destination nodes, i.e., the outgoing flow must not exceed the total incoming flow for any node except the source. Constraint \eqref{c2} enforces that the total outflow from the source is at least sufficient to meet the maximum individual demand among destination nodes. Constraint \eqref{c3} guarantees that each destination node receives a flow no less than its required demand. Constraints \eqref{c4} and \eqref{c5} enforce non-negativity of flows and topological feasibility, ensuring that flow is only allowed on existing edges.

\added{In our target deployment, each routing node abstracts a city- or region-level controller in the provider core/metro network, where link weights are administratively configured and remain stable over minutes to hours (e.g., OSPF/IS-IS TE metrics\cite{sllame}, MPLS/SRv6 path costs\cite{liangbocheng}). Hence, the unit transmission cost $e_{(i,j)}$ primarily captures geography-driven transport cost, peering/egress tariff, and energy per bit; these factors vary slowly and can be treated as fixed during a routing session.} 

\added{
We model the wireless access segment as \emph{piecewise-constant} over short control windows: although radio conditions fluctuate, configurations are computed from the most recent measurements per window and applied until the next update; thus we optimize on the latest snapshot and trigger event-driven re-planning upon significant deviations. This time-scale separation preserves per-snapshot tree optimality, while inter-window dynamics are absorbed by subsequent re-optimizations; snapshot-based TE under variability follows the same regime~\cite{figret_sigcomm24}. Our sub-second inference on practical graph sizes also ensures that such re-plans can be executed promptly in deployment.
}

\section{GNN Based Multi-cast Method}
\subsection{Problem Analysis}
To gain deeper insight into the mathematical structure of the optimization problem defined in Problem \ref{p1}, we investigate the topological and flow properties of its optimal solution. Understanding these structural characteristics not only facilitates efficient algorithm design but also motivates the application of graph-based learning architectures in later sections. At its core, the optimization problem seeks to deliver flow from a single source to multiple destination nodes, each with heterogeneous demand, while minimizing the total transmission cost over a given network topology. Unlike unicast or uniform multicast problems, the presence of differentiated QoS requirements across destination nodes introduces additional complexity. Nevertheless, it can be shown that the optimal solution exhibits a highly regular topological structure that simplifies the underlying problem. Specifically, we establish that the set of links carrying non-zero flow in the optimal solution forms a tree rooted at the source node, spanning all destination nodes as leaves. This structural insight is formalized in the following theorem.
\begin{theorem}\label{theorem-1}
For links carrying flow, i.e., where $f_{(i,j)} > 0$ and $e_{(i,j)} \in \mathcal{E}$, these links form a tree structure with the source node as the root and all destination nodes as the leaf nodes. The flow on each link follows the direction from the root node to the leaf nodes, corresponding to increasing depth in the tree. The minimum flow that satisfies the outflow requirements of all destination nodes must conform to this tree structure.
\end{theorem}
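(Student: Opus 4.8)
The plan is to take an optimal solution $\bm f^{\star}$ of Problem~\ref{p1}, collect its \emph{support} $H=\{(i,j):\ f^{\star}_{(i,j)}>0,\ e_{(i,j)}\in\mathcal E\}$ oriented in the direction of flow, and then apply a sequence of cost‑decreasing local surgeries until $H$ is forced to be a tree rooted at $s$ whose leaves are exactly the destinations. The central ingredient is a \emph{unique‑parent lemma}: in an optimal solution every node $v\neq s$ with positive inflow has exactly one incoming edge in $H$, and it carries $v$'s ``level'' $L_v:=\max_{k}f^{\star}_{(k,v)}$. This follows from an exchange argument: by \eqref{c1} a node's inflow is useful only through its maximum, so if $v$ had two positive in‑edges, setting the non‑maximal one to $0$ preserves $L_v$, keeps \eqref{c1} at $v$ (its outflow is unchanged) and at the former in‑neighbour (its outflow only shrinks), preserves \eqref{c3} once we retain an in‑edge from a non‑destination node — which \eqref{c3} itself guarantees exists carrying at least $x^{v}$ — and strictly decreases the objective \eqref{obj}. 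In particular $H$ has no $2$‑cycle, so the direction of flow on each used link is well defined.

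Given the unique‑parent lemma, assigning to each non‑source node its parent turns $H$ into a functional graph, whose connected components are therefore trees rooted at an in‑edge‑free node of $H$, possibly together with components carrying a single directed cycle. I would then discard every component except the one containing $s$. An in‑edge‑free node $r\neq s$ has $L_r=0$, hence by \eqref{c1} no positive out‑edges either, so it is isolated and cannot root a nontrivial component. A component carrying a directed cycle has, by chaining \eqref{c1} around the cycle, a constant positive level that originates nowhere; deleting all flow in it strictly lowers \eqref{obj}, and one checks \eqref{c3} is not broken for any destination once we restrict to instances whose demands are genuinely served from $s$. Hence $H$ is a single tree rooted at $s$; \eqref{c3} forces every destination into this tree, and since each edge points from parent to child the flow runs away from the root, i.e.\ toward increasing depth — this is the direction claim, and it exhibits the optimum ``conforming to the tree structure.''

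It remains to identify the leaves. If a leaf of $H$ were a relay node it would have no children and no demand, so its single parent edge would carry $L_v>0$ at strictly positive cost while serving nothing; zeroing it and propagating the reduction up the tree strictly decreases \eqref{obj}, a contradiction, so every leaf is a destination (and $s$ is internal as soon as $K\ge 1$). For the reverse inclusion — every destination is a leaf — I would show no destination is ever profitably used as a transit node: if destination $d$ had a child subtree in $H$, its parent edge would be forced to carry $\max(x^{d},\cdot)>x^{d}$, and an exchange that re‑hangs that subtree on the cheapest ancestor‑side attachment, combined with the unique‑parent structure, drives $d$'s out‑degree in $H$ to zero without increasing cost. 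Putting the three steps together gives the theorem.

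I expect this last exchange — ruling out destinations as relays — to be the main obstacle. It is the only point where the argument must use the global edge metric rather than purely local conservation, and it interacts delicately with the $\max$‑type form of \eqref{c1}: reconciling the capacity role of an in‑edge (needed for \eqref{c1}) with the non‑destination origin required by \eqref{c3}, while still leaving each node a single parent, is the bookkeeping I would carry out most carefully, and it is also where any implicit assumption on the feasibility of the instance would have to be stated explicitly.
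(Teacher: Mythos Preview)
Your unique-parent lemma is a clean alternative to the paper's argument. The paper works on the undirected support: it prunes components not containing $s$, then breaks each cycle by deleting the edge incident to the cycle-vertex nearest the source, and finally prunes any non-destination leaf. You instead work on the oriented support and show directly that every non-source node has in-degree at most one in $H$; this forces a rooted forest and, after discarding source-free pieces, the tree. Both routes reach the same structure, and yours handles the ``flow direction from root to leaves'' claim more transparently than the paper's somewhat informal cycle-breaking step.

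The real gap is your last step. The paper proves only that every \emph{leaf} of $H$ is a destination; it never claims, and does not prove, the reverse inclusion. You have read ``all destination nodes as the leaf nodes'' as an equality, but the intended content is just leaves\,$\subseteq\,\mathcal V_d$. The reverse inclusion is in fact false. On the path graph with vertices $s,r_1,d_1,r_2,d_2$ in that order, with $r_1,r_2\in\mathcal V_r$ and $d_1,d_2\in\mathcal V_d$, any feasible solution must route $d_2$'s demand through $d_1$, so $d_1$ is necessarily an internal node of the optimal tree; constraint \eqref{c3} at $d_2$ is still satisfied because its parent $r_2$ is a relay. Your proposed ``re-hang the subtree on a cheapest ancestor-side attachment'' exchange has nowhere to go here. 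So the difficulty you flagged is not merely delicate bookkeeping --- the claim itself does not hold. Drop that step; your first two already cover everything the paper actually establishes.
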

\begin{proof}
    The tree structure described in Theorem \ref{theorem-1} trivially satisfies constraints \eqref{c2}, \eqref{c4}, and \eqref{c5}. To satisfy constraints \eqref{c1} and \eqref{c3}, there must exist a connected path from the source node to each destination node. The subgraph formed by the flow-carrying links must therefore be connected. If a disconnected subgraph exists, any component not connected to the source node cannot deliver flow to all destination nodes. Removing such isolated components would reduce the total cost while maintaining constraint feasibility, contradicting optimality. Furthermore, suppose the flow-carrying subgraph contains a cycle. Let $v$ be the node in the cycle that is closest to the source. Eliminating the incoming link to $v$ along the cycle preserves connectivity and all constraint conditions, while reducing redundant flow, yielding a lower-cost solution. Similarly, if two nodes $u$ and $v$ are connected by multiple disjoint paths, any flow along one path can be removed without violating constraints, again improving the objective value. Lastly, if the tree includes a non-destination leaf node, the branch leading to that node can be pruned without affecting destination demands, further reducing cost. Thus, the flow-carrying subgraph must be a minimal connected acyclic structure spanning all destination nodes from the source—i.e., a tree rooted at the source.
\end{proof}

The tree structure not only ensures minimum redundancy but also reveals how flow is shared and reused across paths toward multiple destination nodes. To further characterize the optimal flow allocation along shared links, we state the following lemma.

\begin{lemma}\label{lemma-1}
In the optimal solution of Problem \ref{p1}, if a link is shared to transmit flow from the source node to multiple destination nodes, the flow size on this link equals the maximum outflow demand among those destination nodes.
\end{lemma}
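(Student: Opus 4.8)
The plan is to lean on the tree characterization of Theorem~\ref{theorem-1} and then run a termwise comparison against a ``canonical'' flow that is explicitly the cheapest flow consistent with that tree. Fix an optimal flow $\bm{f}^\star$. By Theorem~\ref{theorem-1} its flow-carrying links form a tree $T$ rooted at the source $s$, with the destinations $\mathcal{V}_d$ as leaves and every link oriented from the root toward the leaves. For a link $(i,j)$ of $T$ with $j$ the child endpoint, write $T_j$ for the subtree hanging below $j$ and $D_j = \mathcal{V}_d\cap T_j$ for the destinations it feeds; the link is ``shared'' exactly when $|D_j|\ge 2$. I will actually prove the slightly stronger statement $f^\star_{(i,j)} = \max_{d\in D_j} x^d$ for \emph{every} link of $T$, of which the lemma is the case $|D_j|\ge 2$.

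First I would establish the lower bound $f^\star_{(i,j)}\ge \max_{d\in D_j}x^d$ purely from feasibility. For a destination $d\in D_j$, trace the unique $T$-path $j=v_0,v_1,\dots,v_m=d$. Since $T$ is a tree and flow runs root-to-leaf, the only link carrying flow into $v_0=j$ is $(i,j)$, so reading \eqref{c1} at $j$ as ``maximum outflow does not exceed maximum inflow'' gives $f^\star_{(v_0,v_1)}\le f^\star_{(i,j)}$; applying \eqref{c1} in turn at $v_1,\dots,v_{m-1}$ propagates this to $f^\star_{(v_{m-1},v_m)}\le f^\star_{(i,j)}$. Because destinations are leaves of $T$, the parent $v_{m-1}$ of $d$ is a non-destination node, so \eqref{c3} at $d$ yields $x^d\le f^\star_{(v_{m-1},d)}\le f^\star_{(i,j)}$. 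Since $d\in D_j$ was arbitrary, $f^\star_{(i,j)}\ge \beta_j := \max_{d\in D_j}x^d$, and this holds on every link of $T$.

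Next I would build the canonical flow $\hat{\bm f}$ with $\hat f_{(i,j)}=\beta_j$ on each link of $T$ and $\hat f=0$ elsewhere, and check it is feasible. Constraints \eqref{c4} and \eqref{c5} are immediate. The key monotonicity is $\beta_c\le\beta_j$ whenever $c$ is a child of $j$ (as $D_c\subseteq D_j$), so along any root-to-leaf path the link flows are nonincreasing; this makes \eqref{c1} hold at every non-source node (the maximum outflow equals the largest flow among the child links, which is at most the inflow $\beta_j$), and at a destination leaf it reduces to $0\le x^d$. At the source the outgoing links jointly cover $\mathcal{V}_d$, so the largest source outflow is $\max_j\beta_j=\max_k x^k$, giving \eqref{c2}; and each destination leaf receives $\beta_d=x^d$ from a non-destination parent, giving \eqref{c3}. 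Finally, every feasible flow supported on $T$ has $f_{(i,j)}\ge\beta_j$ on each link of $T$ by the previous step and vanishes off $T$, so its cost $\sum e_{(i,j)}f_{(i,j)}$ is at least the cost of $\hat{\bm f}$, with equality iff $f_{(i,j)}=\beta_j$ on every link of $T$. As $\bm f^\star$ is feasible, supported on $T$, and optimal, it must attain this minimum, hence $f^\star_{(i,j)}=\beta_j=\max_{d\in D_j}x^d$ on every link, in particular on every shared link.

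I expect the main obstacle to be conceptual rather than computational: correctly interpreting the conservation constraint \eqref{c1}. Because multicast replicates a stream rather than splitting it additively, \eqref{c1} has to be used in its ``max-outflow $\le$ max-inflow'' form, and the feasibility verification of $\hat{\bm f}$ hinges entirely on the monotonicity $\beta_c\le\beta_j$ down the tree --- that is what lets a per-node (local) constraint be certified globally without any cascading adjustments, and it is the step most likely to need careful bookkeeping about which links are incoming versus outgoing at each internal node.
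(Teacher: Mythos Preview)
Your proof is correct and follows the same two-step logic as the paper: feasibility forces each tree link to carry at least the maximum downstream demand (via cascading \eqref{c1} down the root-to-leaf path and invoking \eqref{c3} at the leaf), and optimality forbids any excess. The paper dispatches the upper bound in a single clause (``any excess flow would violate the optimality condition''), whereas you make it explicit by constructing the canonical flow $\hat{\bm f}$ with $\hat f_{(i,j)}=\beta_j$, verifying its feasibility through the monotonicity $\beta_c\le\beta_j$, and doing a termwise cost comparison --- a more rigorous packaging of the same idea.
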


\begin{proof}
If the flow along a shared link is less than the highest downstream demand, then according to Theorem \ref{theorem-1} and constraint \eqref{c1}, the destination node with the largest demand cannot receive a sufficient amount of flow, thereby violating constraint \eqref{c3}. Therefore, the link must carry at least the maximum demand among all downstream destination nodes, and any excess flow would violate the optimality condition.
\end{proof}

These structural observations play a critical role in guiding the design of learning-based solvers. In particular, the tree-like topology and flow propagation characteristics suggest that a graph neural network can effectively infer link usage and flow allocation by leveraging the hierarchical dependencies within the network. As demonstrated in the following sections, incorporating this domain knowledge into the GNN model improves both learning efficiency and generalization across variable network topologies.

\subsection{Policy-Gradient Multicast Routing Method}
\added{Building on the structural results in Section~\ref{sec-3}, especially Theorem~\ref{theorem-1} and Lemma~\ref{lemma-1}, we design a reinforcement-learning multicast routing algorithm that exploits the \emph{flow-reuse} nature of the optimal solution. Let the network be $\mathcal{G}=(\mathcal{V},\mathcal{E})$ with link cost $e(u,v)$ and user demands $\{x_k\}$. Users are processed in descending demand order, i.e., $(u_{(1)},\ldots,u_{(K)})$ with $x_{(1)}\ge\cdots\ge x_{(K)}$. For $k=1$, we build a minimum-cost path from $u_{(1)}$ to the source. For each $k>1$, we grow a path from $u_{(k)}$ until it reaches any previously visited node; the newly constructed path is then merged and downstream flow is reused. This sequential scheme is consistent with the tree-structured optimality and the shared-edge maximum downstream demand property, and it avoids redundant transmissions by construction. To account for global cost, we cast the per-user path construction as an episodic MDP rather than solving independent shortest paths. Naively selecting the locally shortest route for each user can raise the cumulative cost of later users by eliminating profitable reuse opportunities. In contrast, our policy-gradient formulation optimizes long-term return, explicitly trading off immediate transmission cost and downstream reuse.}

\subsubsection{State.}
\added{At user $k$ and decision step $t$, the state is}
\begin{align}
    s_t=\big(\mathcal{G},\,\mathcal{V}^{(k)}_{\text{inflow}},\,\mathcal{P}_t\big),
\end{align}
\added{where $\mathcal{V}^{(k)}_{\text{inflow}}\subseteq\mathcal{V}$ collects visited/inflow-enabled nodes created by users $1{:}k{-}1$, and $\mathcal{P}_t\subseteq\mathcal{V}$ is the current partial path for $u_{(k)}$ whose last node is $u_t$. The candidate set is}
\begin{align}
    \mathcal{N}_t \triangleq \{\, v\in\mathcal{V} : (u_t,v)\in\mathcal{E}\, \}.
\end{align}
\subsubsection{Action.}
\added{Choose the next node $v_t \in \mathcal{N}_t \cup \mathcal{V}^{(k)}_{\text{inflow}}$.
If $v_t\in\mathcal{V}^{(k)}_{\text{inflow}}$, the episode terminates and the completed path $\mathcal{P}_k$ is merged:}
\begin{align}
    &\mathcal{V}^{(k+1)}_{\text{inflow}} \;=\; \mathcal{V}^{(k)}_{\text{inflow}} \,\cup\, \mathcal{P}_k, \\ &\mathcal{P}_{t{+}1}=\emptyset.
\end{align}
\added{Otherwise, set $\mathcal{P}_{t{+}1}=\mathcal{P}_t\cup\{v_t\}$ and continue.}

\subsubsection{Reward and Objective.}
\added{At each step}
\begin{align}
    r_t \;=\; -\, x_{(k)} \, e(u_t,v_t),
\end{align}
\added{so the return is $R=\sum_t \gamma^t r_t$ with discount $\gamma\in(0,1)$. Maximizing $\mathbb{E}[R]$ aligns the policy with minimizing expected total transmission cost while preserving future reuse potential.}

\subsection{Graph Attention Based RL Method}
To learn efficient routing decisions under the sequential, demand-aware multicast formulation, we propose a graph policy network (GPN) based on a graph attention network (GAT) encoder and an LSTM-based path history aggregator\cite{Zheng2012An}\cite{10.1162/neco.1997.9.8.1735}. Since we apply reinforcement learning to guide node-wise routing decisions, the model cannot receive immediate feedback after each action. The final reward—total transmission cost—is only available after a complete path is formed. Moreover, due to the discrete selection and max-aggregation over path lengths, the objective function is non-differentiable with respect to the neural network parameters. Therefore, we adopt a policy gradient method to update the GPN parameters. The GPN consists of two primary components: a GAT-based encoder and an LSTM-based path aggregator. The encoder is responsible for learning node embeddings that reflect both local topology and demand context, while the decoder uses these embeddings to guide next-hop selection during path construction. 
\subsubsection{GAT Encoder}
The encoder is built upon the Graph Attention Network (GAT), which enhances traditional GNNs by learning attention weights for each neighbor rather than assigning them equal importance. For each node $i$ and its neighbor $j \in \mathcal{N}(i)$, the GAT computes attention coefficients $\alpha_{ij}$ as follows.
\begin{align}
e_{ij} &= \text{LeakyReLU}\left(\bm{a}^\top [\bm{W} \bm{x}_i , | , \bm{W} \bm{x}_j] \right), \\
\alpha_{ij} &= \frac{\exp(e_{ij})}{\sum_{k \in \mathcal{N}(i)} \exp(e_{ik})},
\end{align}
where $\bm{W}$ is a shared linear transformation, $\bm{a}$ is a trainable attention vector, $|$ denotes concatenation, and $\bm{x}_i$ is the input feature of node $i$. The updated node embedding $\bm{h}i$ is then computed as:
\begin{align}
\bm{h}i = \sigma\left( \sum{j \in \mathcal{N}(i)} \alpha_{ij} \bm{W} \bm{x}_j \right),
\end{align}
where $\sigma(\cdot)$ is an activation function, such as ReLU. This mechanism enables the encoder to attend differently to neighbors based on their relevance to the current graph state and demand structure.

\subsubsection{LSTM-Based Path History Aggregator}
To capture the sequential dependency of selected routing nodes, an LSTM is used to aggregate features from previously chosen nodes. This component allows the policy to consider the history of decisions made, which is essential since routing is constructed incrementally. At each routing step $t$, the LSTM state is updated as follows.
\begin{align}
\bm{h}^t, \bm{c}^t = \text{LSTM}(\bm{x}^t; \bm{h}^{t-1}, \bm{c}^{t-1}),
\end{align}
where $\bm{x}^t$ is the GAT-encoded feature of the $t$-th selected node, and $(\bm{h}^t, \bm{c}^t)$ denote the hidden and cell states of the LSTM, encoding the cumulative representation of the selected path. The initial input to the LSTM is the GAT embedding of the user node being routed.

\subsubsection{Attention-Based Decoder}
The decoder computes the policy distribution over the candidate next-hop nodes based on attention between the LSTM’s current hidden state and the GAT-encoded candidate node features. For each candidate node $v \in \mathcal{V}^{\text{next}}$, an attention score is computed as follows.
\begin{align}
p_v^t =
\begin{cases}
(\bm{h}^{t-1})^\top \tanh(W_2 \bm{x}_v + W_3 \bm{h}^{t-1}) & \text{if } v \notin \mathcal{P}_t, \\
-\infty & \text{otherwise},
\end{cases}
\end{align}
where $W_2$ and $W_3$ are trainable weight matrices, and $\mathcal{P}t$ denotes the set of nodes already in the selected path (to prevent loops). The softmax function is then applied to normalize the scores and obtain the probability distribution over actions as follows.
\begin{align}
\pi\theta(s_t, a_t) = \text{Softmax}(p_v^t).
\end{align}
The next-hop node is sampled from this distribution during training or selected greedily during inference. This selection strategy enables the policy to balance local cost (edge weight) and global path optimality learned through accumulated reward.

\begin{theorem}\label{theorem-2}
\added{On sparse graphs with $|\mathcal{E}|=\Theta(|\mathcal{V}|)$ and fixed $H,K$, the GA-based routing algorithm is at least
$\Omega(GPU\log|\mathcal{V}|)$ times more expensive than the proposed GPN method. If $G,P$ are treated as constants, the gap is $\Omega(|\mathcal{V}_d|\log|\mathcal{V}|)$.}
\end{theorem}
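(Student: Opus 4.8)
The plan is to prove Theorem~\ref{theorem-2} by separately deriving an $\mathcal{O}(|\mathcal{V}|)$ upper bound on the running time of the proposed GPN pipeline and an $\Omega(GPU\,|\mathcal{V}|\log|\mathcal{V}|)$ lower bound on that of the genetic-algorithm (GA) baseline over the sparse family $|\mathcal{E}|=\Theta(|\mathcal{V}|)$, and then taking the ratio. Throughout, $H$ (the number of GAT layers), the embedding width, and $K$ are treated as fixed constants, $U=|\mathcal{V}_d|$ is the number of users, and $G$ and $P$ are the number of generations and the population size of the GA.

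First I would bound the GPN inference cost. The encoder performs a single forward pass: each of the $H$ GAT layers executes one round of message passing, costing $\mathcal{O}(|\mathcal{E}|)$ attention-score evaluations and node updates, which is $\mathcal{O}(|\mathcal{V}|)$ since $|\mathcal{E}|=\Theta(|\mathcal{V}|)$. For the decoder, the central observation is that the merge-and-stop rule of the sequential construction makes the newly created portions of the path segments $\mathcal{P}_1,\dots,\mathcal{P}_U$ pairwise node-disjoint: the walk for user $k$ is only extended through nodes not already in the inflow set and terminates the first time it reaches that set, so $\sum_{k}|\mathcal{P}_k|\le |\mathcal{V}|+U=\mathcal{O}(|\mathcal{V}|)$ --- in agreement with Theorem~\ref{theorem-1}, whose tree has at most $|\mathcal{V}|-1$ edges. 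Consequently the total number of decoding steps is $\mathcal{O}(|\mathcal{V}|)$; each step costs an LSTM update, which is $\mathcal{O}(1)$ for fixed width, plus one attention score per candidate, and a handshake argument gives $\sum_{t}\deg(u_t)\le 2|\mathcal{E}|=\mathcal{O}(|\mathcal{V}|)$. Adding the encoder and decoder contributions yields $T_{\mathrm{GPN}}=\mathcal{O}(|\mathcal{V}|)$, recovering the linear-complexity claim.

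Next I would lower-bound the GA baseline. It runs for $G$ generations; in each generation it evaluates the fitness of $P$ individuals, while the genetic operators (selection, crossover, mutation) cost only $\mathcal{O}(P|\mathcal{V}|)$ per generation and are dominated. I would fix a canonical GA model for this problem --- relay-set or user-ordering chromosomes decoded via shortest paths, as is standard for Steiner/multicast GAs --- and argue that any faithful fitness evaluation must, for the decoded configuration, establish a feasible source-to-user route for each of the $U$ users and weight it by the downstream demand of that route, which requires $\Omega(U)$ single-source shortest-path (or equivalent Bellman-relaxation) subroutines on $\mathcal{G}$. In the comparison / pointer-machine model, each such subroutine on a connected graph with $\Theta(|\mathcal{V}|)$ edges costs $\Omega(|\mathcal{V}|\log|\mathcal{V}|)$ --- equivalently, this is the cost of Dijkstra with a binary heap, where $|\mathcal{E}|+|\mathcal{V}|\log|\mathcal{V}|=\Theta(|\mathcal{V}|\log|\mathcal{V}|)$ on this family. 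Multiplying gives $T_{\mathrm{GA}}=\Omega(GPU\,|\mathcal{V}|\log|\mathcal{V}|)$. Dividing by $T_{\mathrm{GPN}}$ yields $T_{\mathrm{GA}}/T_{\mathrm{GPN}}=\Omega(GPU\log|\mathcal{V}|)$; treating $G,P$ as constants and substituting $U=|\mathcal{V}_d|$ gives the stated $\Omega(|\mathcal{V}_d|\log|\mathcal{V}|)$ gap.

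I expect the main obstacle to be making the GA lower bound genuinely rigorous rather than a heuristic operation count, since the GA's cost is encoding-dependent: I must commit to a concrete but representative GA model and argue that, within it, the per-evaluation work is truly $\Omega(U|\mathcal{V}|\log|\mathcal{V}|)$ --- in particular that the $\log|\mathcal{V}|$ factor is unavoidable (invoking a comparison-based sorting/selection lower bound, or declaring Dijkstra-with-heap as the reference decoder) and that the $U$ route computations inside a single evaluation cannot be amortized the way the GPN decoder amortizes its node-disjoint path segments across users. A lighter, secondary obstacle is verifying that the GPN decoding steps really sum to $\mathcal{O}(|\mathcal{V}|)$ rather than $\mathcal{O}(U|\mathcal{V}|)$; this follows directly from the node-disjointness forced by the merge rule, i.e.\ from Theorem~\ref{theorem-1}, so it should be routine.
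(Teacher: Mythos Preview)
Your proposal is correct and follows the same approach as the paper: bound $T_{\mathrm{GPN}}=\Theta(|\mathcal{V}|)$ and $T_{\mathrm{GA}}=\Theta(GP\,|\mathcal{V}_d|\,|\mathcal{V}|\log|\mathcal{V}|)$ (the latter via $|\mathcal{V}_d|$ Dijkstra calls per fitness evaluation over $G$ generations of $P$ individuals), then take the ratio. The paper's argument is in fact terser than yours --- it only writes down the encoder cost $\mathcal{O}(|\mathcal{V}|H^{2}K+|\mathcal{E}|HK)$ (where $H$ is the \emph{hidden dimension} and $K$ the number of attention heads, not layer count) and simply asserts the GA cost from its specific implementation, so your decoder-amortization via node-disjointness and your comparison-model caveats already go beyond what the paper actually proves.
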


\begin{proof}
\added{GPN computes one graph encoding and reuses embeddings; thus
$T_{\text{GPN}}=\mathcal{O}(|\mathcal{V}|H^{2}K+|\mathcal{E}|HK)=\Theta(|\mathcal{V}|)$
when $H,K$ are fixed and $|\mathcal{E}|=\Theta(|\mathcal{V}|)$.
In our GA implementation, each generation evaluates $P$ individuals, and each fitness evaluation performs up to $|\mathcal{V}_d|$ weighted shortest-path runs.
Using Dijkstra, one run costs $\Theta((|\mathcal{E}|+|\mathcal{V}|)\log|\mathcal{V}|)=\Theta(|\mathcal{V}|\log|\mathcal{V}|)$ on sparse graphs.
Hence over $G$ generations:
$T_{\text{GA}}=\Theta(GP|\mathcal{V}_d|\,|\mathcal{V}|\log|\mathcal{V}|)$, so
$\frac{T_{\text{GA}}}{T_{\text{GPN}}}=\Omega(GP|\mathcal{V}_d|\log|\mathcal{V}|)$.
If $G,P$ are constants, this reduces to $\Omega(|\mathcal{V}_d|\log|\mathcal{V}|)$, where $G$ is the number of generations, $P$ the population size, $H$ the hidden dimension, and $K$ the number of attention heads.} 
\end{proof}

\added{
Our proposed GPN contains fewer than 3M parameters under our settings, resulting in a memory footprint of less than 50 MB during inference. This compact design enables deployment on resource-constrained edge devices (e.g., base stations or UAV relays). Since inference only involves forward propagation without iterative search, the model achieves predictable latency and low energy cost, further validating its suitability for real-time multicast routing in 6G environments. 
}

\section{Experiment}
\subsection{Experimental Setup}

To evaluate the performance of our proposed GNN-based multicast routing algorithm, we conduct experiments on synthetic network topologies generated using the NetworkX library. We consider two evaluation settings: (1) varying the number of nodes and users to assess the generalization ability of the model, and all graphs are generated with a fixed degree of 4 except the comparison for varying fixed node degre and varying average degree, ensuring moderate connectivity while preserving topological diversity., and (2) fixed-scale graphs with 50 nodes and 12 users, where the average node degree is controlled at 3, 4, 5, 6, and 7 to analyze performance under different topological densities. User demand levels are categorized as high ($1.0$), medium ($0.5$), and low ($0.25$), with approximately one-third of users randomly assigned to each level in every experiment. For each configuration, we generate 100 random graphs to account for statistical variance and ensure robustness of the evaluation. This simulates a realistic scenario of differentiated video service requirements in 6G networks.

We compare our method with the following baselines:
\begin{itemize}
    \item \textbf{Shortest Path Routing (Dijkstra)},
    \item \textbf{Genetic Algorithm (GA)},
    \item \textbf{Bee Colony Optimization (BCO)}\cite{Zheng2012An},
    \item \textbf{Dynamic Programming (DP)}, serving as a computationally expensive yet theoretically optimal reference.
    \item \textbf{Graph Attention Network (GAT)} baseline with demand-awareness, sequential routing, and flow reuse\cite{veličković2018graphattentionnetworks}.
\end{itemize}

All models are implemented in Python using NetworkX and PyTorch Geometric (PyG), and all experiments are conducted on a workstation equipped with an Intel(R) Xeon(R) Silver 4214R CPU and an NVIDIA RTX 3090 GPU. To highlight inference efficiency, we report the \textbf{logarithm (base 10)} of the routing time for each method, as the execution times across different algorithms vary by several orders of magnitude. This logarithmic transformation provides a more interpretable and visually balanced comparison on performance charts. In addition, we present qualitative visualizations of routing paths generated by various algorithms—including our method, GAT, shortest path, and the dynamic programming-based theoretical optimum—to intuitively illustrate structural efficiency, path reuse, and differentiated service routing in complex topologies. A comprehensive summary of all experimental results under different configurations is provided in Table~\ref{tab:summarize}.

\added{In our experiments, we set the hidden dimension $H=128$ and the number of attention heads $K=4$, which provide a practical balance between accuracy and efficiency. Increasing $H$ or $K$ yields marginal performance gains at the expense of memory and computation, whereas smaller values reduce the footprint but degrade accuracy. We train with Adam (learning rate $5\times 10^{-4}$), gradient–norm clipping at $1.0$, and a \texttt{MultiStepLR} scheduler with milestones every $500$ steps and decay factor $\gamma=0.96$. Each run uses a batch size of $16$ for $20$ epochs with $2{,}500$ steps per epoch. Training and validation graphs contain $30$ nodes and $9$ users; edges are sampled from an Erd\H{o}s--R\'enyi model with $p{=}0.10$ (training) and $p{=}0.08$ (validation). A fully connected virtual hub node is included; virtual edges take an effective cost of $10$ during reward computation. User–demand weights are $\{1.0, 0.5, 0.25\}$ for the first, middle, and last third of users, respectively, with index normalization by $\texttt{MAX\_USER}{=}20$. In the pointer module, logits are scaled by $10$ before the softmax, and we add $10^{-15}$ for numerical stability. Validation uses a batch size of $8$ with the same graph and user sizes.}
\begin{table*}[htbp]
\centering
\renewcommand{\arraystretch}{1.5}  
\resizebox{\textwidth}{!}{
\begin{tabular}{c|c|ccc|ccc|ccc|ccc|ccc|ccc}
\hline
\multicolumn{2}{c|}{Graph Settings} & \multicolumn{3}{c|}{Theoretical Optimal} & \multicolumn{3}{c|}{BCO} & \multicolumn{3}{c|}{GA} & \multicolumn{3}{c|}{Dijkstra} & \multicolumn{3}{c|}{GAT} & \multicolumn{3}{c}{\textbf{GPN (ours)}} \\
\hline
Setting & Count & \makecell{Cost} & \makecell{Delay} & \makecell{Cost-Delay\\Score}
 & \makecell{Cost} & \makecell{Delay} & \makecell{Cost-Delay\\Score} & \makecell{Cost} & \makecell{Delay} & \makecell{Cost-Delay\\Score} & \makecell{Cost} & \makecell{Delay} & \makecell{Cost-Delay\\Score} & \makecell{Cost} & \makecell{Delay} & \makecell{Cost-Delay\\Score} & \makecell{Cost} & \makecell{Delay} & \makecell{Cost-Delay\\Score}
\\
\hline
\multirow{5}{2.9cm}{\centering\arraybackslash\shortstack{Varying Node Count\\(12 users, degree 4)}} & 30 & 6.781 & 1.812 & 15.374 & 7.168 & 1.977 & 16.314 & 6.834 & 1.593 & 15.261 & 8.437 & -3.098 & \color[HTML]{9A0000} \textbf{13.776} & 7.418 & -0.712 & 14.124 & 7.200 & -0.585 & \color[HTML]{00009B} \underline{13.816} \\

& 35 & 7.296 & 1.843 & 16.436 & 7.755 & 2.045 & 17.554 & 7.357 & 1.646 & 16.359 & 9.255 & -3.065 & 15.445 & 7.995 & -0.677 & \color[HTML]{00009B} \underline{15.313} & 7.808 & -0.554 & \color[HTML]{9A0000} \textbf{15.063} \\

& 40 & 7.680 & 1.903 & 17.263 & 8.063 & 2.083 & 18.208 & 7.744 & 1.688 & 17.175 & 9.729 & -3.022 & 16.436 & 8.501 & -0.657 & \color[HTML]{00009B} \underline{16.345} & 8.158 & -0.535 & \color[HTML]{9A0000} \textbf{15.782} \\

& 45 & 7.853 & 1.973 & 17.680 & 8.336 & 2.120 & 18.793 & 7.933 & 1.713 & 17.579 & 9.942 & -3.011 & 16.873 & 8.624 & -0.654 & \color[HTML]{00009B} \underline{16.595} & 8.389 & -0.519 & \color[HTML]{9A0000} \textbf{16.259} \\

& 50 & 8.207 & 2.029 & 18.444 & 8.817 & 0.942 & 18.575 & 8.521 & 0.544 & \color[HTML]{00009B} \underline{17.587} & 10.758 & -2.994 & 18.523 & 9.251 & -0.627 & 17.875 & 8.781 & -0.515 & \color[HTML]{9A0000} \textbf{17.048} \\

\hline
\multirow{4}{2.9cm}{\centering\arraybackslash\shortstack{Varying Degree\\(50 nodes, 12 users)}} & 3 & 10.160 & 2.302 & 22.623 & 10.652 & 1.624 & 22.928 & 10.286 & 1.098 & \color[HTML]{00009B} \underline{21.669} & 12.652 & -3.007 & 22.298 & 12.005 & -0.550 & 23.461 & 10.930 & -0.461 & \color[HTML]{9A0000} \textbf{21.399} \\

& 4 & 8.150 & 2.044 & 18.344 & 8.619 & 1.485 & 18.724 & 8.326 & 1.086 & 17.737 & 10.566 & -2.997 & 18.135 & 8.983 & -0.625 & \color[HTML]{00009B} \underline{17.340} & 8.635 & -0.520 & \color[HTML]{9A0000} \textbf{16.749} \\

& 5 & 7.307 & 2.020 & 16.634 & 8.006 & 0.881 & 16.893 & 7.735 & 0.566 & 16.036 & 9.773 & -2.981 & 16.564 & 8.112 & -0.666 & \color[HTML]{00009B} \underline{15.558} & 7.871 & -0.555 & \color[HTML]{9A0000} \textbf{15.188} \\

& 6 & 6.812 & 2.010 & 15.634 & 7.482 & 0.868 & 15.831 & 7.239 & 0.604 & 15.082 & 8.843 & -2.951 & 14.735 & 7.566 & -0.686 & \color[HTML]{00009B} \underline{14.446} & 7.261 & -0.580 & \color[HTML]{9A0000} \textbf{13.942} \\

\hline
\multirow{4}{2.9cm}{\centering\arraybackslash\shortstack{Varying \\ Average Degree\\(50 nodes, 12 users)}} & 3 & 9.835 & 2.170 & 21.840 & 10.272 & 2.161 & 22.705 & 9.857 & 1.699 & 21.412 & 11.916 & -2.995 & \color[HTML]{00009B} \underline{20.837} & 12.253 & -0.557 & 23.949 & 10.411 & -0.467 & \color[HTML]{9A0000} \textbf{20.355} \\

& 4 & 7.995 & 2.252 & 18.242 & 8.513 & 0.884 & 17.910 & 8.258 & 0.555 & 17.071 & 9.881 & -3.005 & \color[HTML]{00009B} \underline{16.757} & 8.821 & -0.635 & 17.006 & 8.508 & -0.533 & \color[HTML]{9A0000} \textbf{16.483} \\

& 5 & 7.178 & 2.235 & 16.591 & 7.747 & 0.869 & 16.362 & 7.458 & 0.594 & 15.510 & 8.931 & -2.975 & \color[HTML]{00009B} \underline{14.886} & 8.113 & -0.660 & 15.565 & 7.659 & -0.561 & \color[HTML]{9A0000} \textbf{14.758} \\

& 6 & 6.830 & 2.247 & 15.907 & 7.413 & 0.884 & 15.709 & 7.191 & 0.639 & 15.021 & 8.644 & -2.943 & \color[HTML]{00009B} \underline{14.345} & 7.752 & -0.673 & 14.832 & 7.350 & -0.570 & \color[HTML]{9A0000} \textbf{14.130} \\

\hline
\multirow{9}{2.9cm}{\centering\arraybackslash\shortstack{Varying\\Number of User\\(50 nodes, degree 4)}} & 1 & 0.528 & -3.008 & \color[HTML]{00009B} \underline{-1.952} & 0.528 & -0.446 & 0.611 & 0.528 & -0.201 & 0.855 & 0.528 & -3.859 & \color[HTML]{9A0000} \textbf{-2.803} & 0.553 & -1.436 & -0.329 & 0.530 & -1.317 & -0.257 \\

& 2 & 1.437 & -2.252 & \color[HTML]{00009B} \underline{0.622} & 1.462 & -0.074 & 2.849 & 1.442 & -0.048 & 2.837 & 1.537 & -3.644 & \color[HTML]{9A0000} \textbf{-0.569} & 1.521 & -1.187 & 1.855 & 1.471 & -1.079 & 1.863 \\

& 3 & 3.169 & -1.839 & \color[HTML]{00009B} \underline{4.499} & 3.207 & 0.141 & 6.555 & 3.181 & 0.064 & 6.427 & 3.428 & -3.508 & \color[HTML]{9A0000} \textbf{3.348} & 3.358 & -1.053 & 5.663 & 3.220 & -0.901 & 5.489 \\

& 4 & 3.414 & -1.439 & \color[HTML]{00009B} \underline{5.389} & 3.474 & 0.310 & 7.257 & 3.432 & 0.152 & 7.016 & 3.880 & -3.408 & \color[HTML]{9A0000} \textbf{4.351} & 3.552 & -0.964 & 6.141 & 3.484 & -0.854 & 6.113 \\

& 5 & 3.924 & -0.940 & \color[HTML]{00009B} \underline{6.908} & 4.046 & 0.427 & 8.519 & 3.969 & 0.219 & 8.156 & 4.509 & -3.334 & \color[HTML]{9A0000} \textbf{5.683} & 4.105 & -0.899 & 7.311 & 4.052 & -0.791 & 7.312 \\

& 6 & 5.196 & -0.829 & \color[HTML]{00009B} \underline{9.562} & 5.452 & 0.531 & 11.435 & 5.302 & 0.284 & 10.889 & 6.195 & -3.272 & \color[HTML]{9A0000} \textbf{9.117} & 5.588 & -0.837 & 10.340 & 5.414 & -0.732 & 10.097 \\

& 9 & 6.810 & 0.587 & 14.208 & 7.222 & 0.773 & 15.218 & 6.991 & 0.441 & 14.423 & 8.683 & -3.101 & 14.265 & 7.448 & -0.711 & \color[HTML]{00009B} \underline{14.185} & 7.235 & -0.612 & \color[HTML]{9A0000} \textbf{13.859} \\

& 12 & 8.207 & 2.029 & 18.444 & 8.817 & 0.942 & 18.575 & 8.521 & 0.544 & \color[HTML]{00009B} \underline{17.587} & 10.758 & -2.994 & 18.523 & 9.251 & -0.627 & 17.875 & 8.781 & -0.515 & \color[HTML]{9A0000} \textbf{17.048} \\

& 15 & 9.600 & 3.740 & 22.940 & 10.666 & 1.089 & 22.421 & 10.089 & 0.649 & 20.828 & 12.540 & -2.889 & 22.192 & 10.665 & -0.561 & \color[HTML]{00009B} \underline{20.769} & 10.323 & -0.456 & \color[HTML]{9A0000} \textbf{20.189} \\

\hline
\multirow{3}{2.9cm}{\centering\arraybackslash\shortstack{Varying Extra\\User Size\\(50 nodes, 9 users,\\ degree 4)}} & 1 & 6.677 & 0.926 & 14.279 & 6.874 & 1.961 & 15.709 & 6.697 & 1.680 & 15.075 & 7.991 & -3.074 & 12.908 & 7.075 & -1.935 & \color[HTML]{9A0000} \textbf{12.214} & 7.072 & -1.692 & \color[HTML]{00009B} \underline{12.453} \\

& 2 & 7.195 & 1.473 & 15.862 & 7.425 & 2.017 & 16.867 & 7.255 & 1.712 & 16.222 & 8.598 & -3.043 & 14.154 & 7.516 & -1.449 & \color[HTML]{9A0000} \textbf{13.583} & 7.515 & -1.229 & \color[HTML]{00009B} \underline{13.801} \\

& 3 & 7.807 & 1.965 & 17.580 & 8.045 & 2.051 & 18.140 & 7.866 & 1.738 & 17.470 & 9.514 & -3.008 & 16.021 & 8.289 & -1.277 & \color[HTML]{9A0000} \textbf{15.300} & 8.220 & -1.102 & \color[HTML]{00009B} \underline{15.337} \\

\hline
\end{tabular}
}

\caption{Performance comparison under varying graph configurations. A composite Cost-Delay Score is used to jointly evaluate routing efficiency and latency, defined as $2 \times \text{Cost} + \log_{10}(\text{Delay})$.}
\label{tab:summarize}
\end{table*}

\subsection{\added{Illustrative Example}}
\added{
Consider a source $s$ and three destinations $u_1,u_2,u_3$ with heterogeneous demands 
$\bm{x}=[x_1,x_2,x_3]=[4k,2k,1k]$ (arbitrary units). 
By Theorem~\ref{theorem-1}, the optimal solution forms a tree; by Lemma~\ref{lemma-1}, the flow on any shared edge equals the maximum downstream demand. 
As illustrated in Fig.~\ref{fig:example}, $u_1$ and $u_2$ share an upstream link from $s$ with flow $\max\{x_1,x_2\}=4$, while $u_3$ is routed along a separate branch with flow $x_3=1$. 
The total transmission cost is
\[
C \;=\; e_{(s,a)}\!\cdot\!4 + e_{(a,u_1)}\!\cdot\!4 + e_{(a,u_2)}\!\cdot\!2 + e_{(s,b)}\!\cdot\!1 + e_{(b,u_3)}\!\cdot\!1 \;=\; 21(k),
\]
where the unit edge costs $e_{(\cdot,\cdot)}$ are annotated in Fig.~\ref{fig:example}.
}
\begin{figure}
    \centering
    \includegraphics[width=0.7\linewidth]{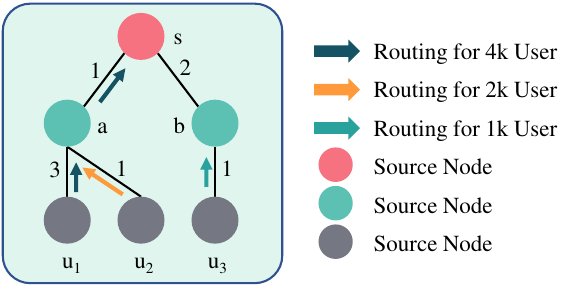}
    \caption{Routing Example}
    \label{fig:example}
\end{figure}
\added{
Our formulation naturally fits production deployments spanning \emph{CDN/edge multicast for live and on-demand video}, \emph{5G/6G multicast-broadcast services (MBS) and campus IPTV}, \emph{XR/holographic telepresence}, \emph{V2X HD map and perception updates}, and \emph{public-safety alerts} across administrative domains, where metro/core links are predominantly distance- and tariff-driven and thus well modeled by fixed weights; in all cases, path reuse across heterogeneous QoS aligns with our tree-optimality analysis and the sequential, demand-aware routing policy, enabling lower total flow under practical cost/latency constraints.
}

\subsection{Routing Cost Comparison across Algorithms}

To assess the routing efficiency in terms of total transmission cost, we compare our proposed method with four baseline algorithms under various network configurations. The evaluation focuses on how the total flow cost changes with respect to network scale, connectivity, and user group size.

\subsubsection*{1) Varying Node Count (Users = 12)}
We first analyze the routing cost as the number of nodes increases from 30 to 50 in steps of 5, while keeping the number of users fixed at 12. As shown in Fig.~\ref{fig:cost_nodecount}, our method consistently outperforms the shortest path and GAT-based approaches across all node sizes. Compared to the bee colony optimization baseline, GPN achieves similar performance, while it shows slightly higher cost than the genetic algorithm and the dynamic programming solution. Nonetheless, our model offers significantly better inference efficiency than these more computationally intensive methods, making it a practical alternative for large-scale scenarios. Notably, the shortest path method, despite incorporating a node reuse mechanism, still results in the highest transmission cost throughout. This is because it treats each user independently, failing to consider shared routes or the correlation between destination nodes, thus leading to redundant and fragmented paths.

\begin{figure}[!t]
    \centering
    \includegraphics[width=0.90\linewidth]{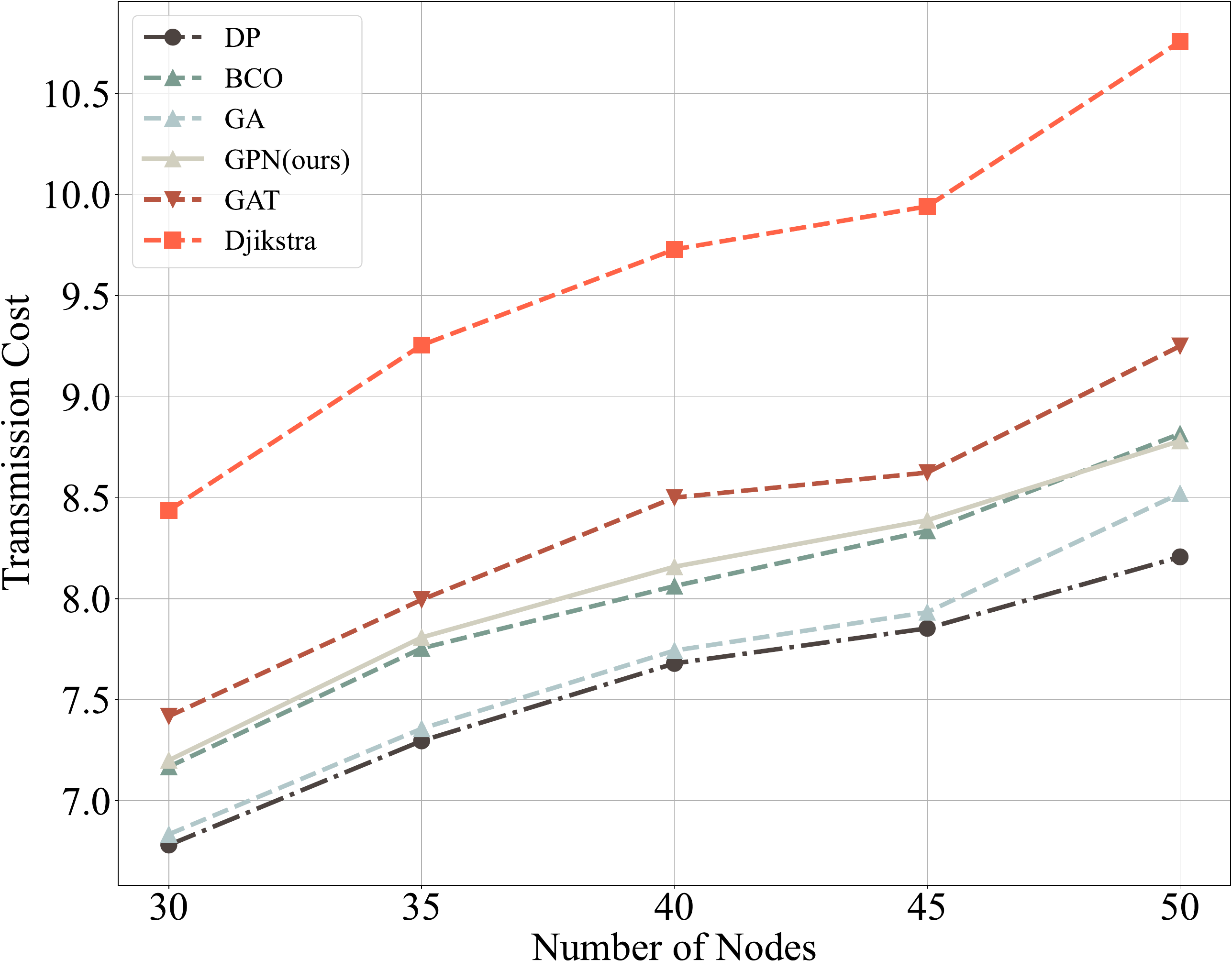}
    \caption{Total routing cost vs. number of nodes (users = 12).}
    \label{fig:cost_nodecount}
\end{figure}

        

\begin{figure}[!t]
    \centering
    \includegraphics[width=0.90\linewidth]{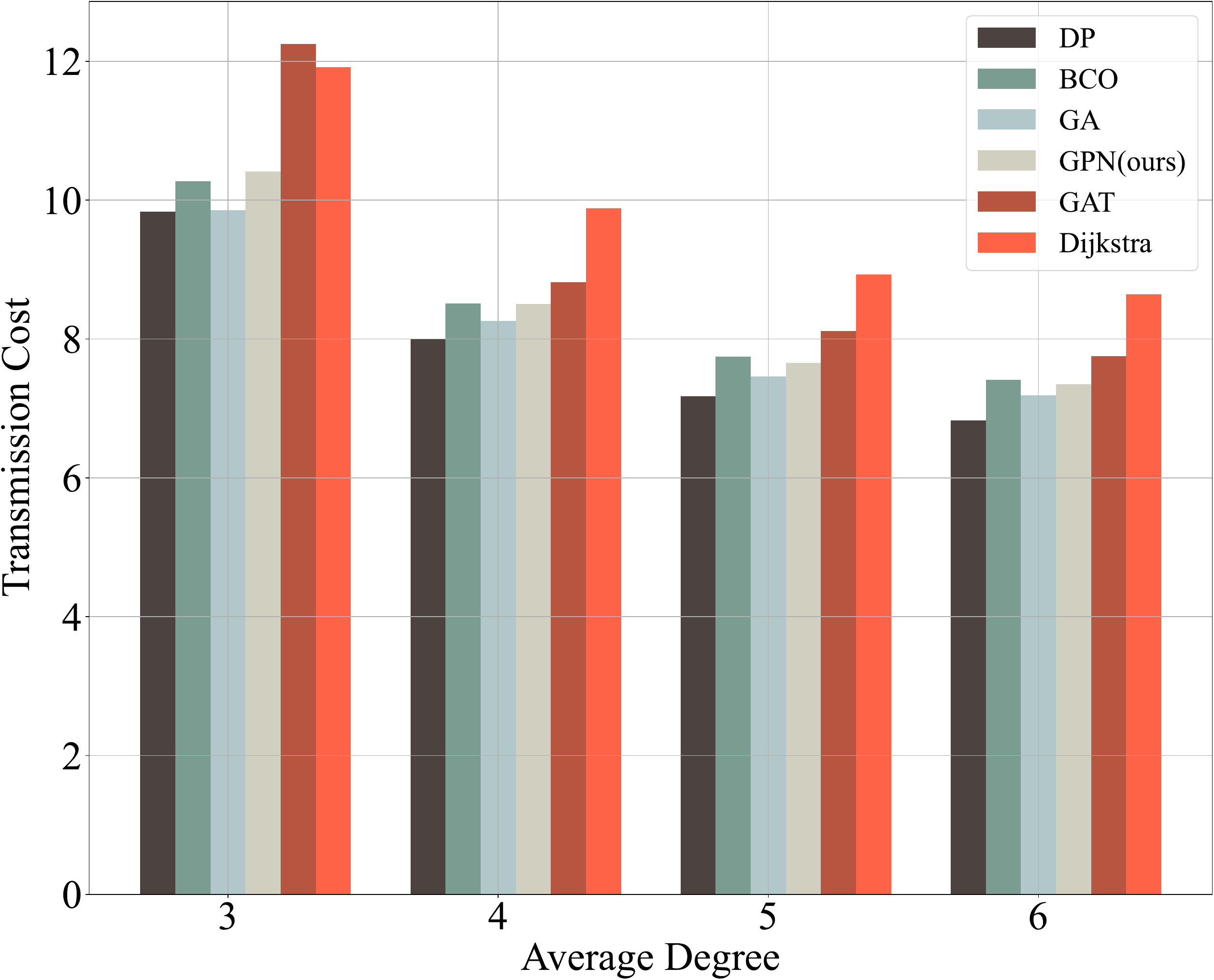}
    \caption{Total routing cost vs. average degree (nodes = 50, users = 12).}
    \label{fig:cost_avgdegree}
\end{figure}
\subsubsection*{2) Varying Average Degree (Nodes = 50, Users = 12)}
Next, we examine the effect of increasing average node degree from 4 to 6 under a fixed network size of 50 nodes and 12 users. The results are illustrated in Fig.~\ref{fig:cost_avgdegree}. As the connectivity becomes denser, all algorithms generally benefit from greater routing flexibility, resulting in reduced overall flow cost. Consistent with previous results, GPN outperforms shortest path and GAT by a significant margin and performs on par with BCO. It exhibits slightly higher cost than the GA and DP baselines, which remain the most cost-efficient overall. This again confirms that GPN achieves a strong balance between routing efficiency and computational feasibility. Notably, shortest path routing continues to yield the highest cost even with increased connectivity. Although we enhanced it with a basic node reuse mechanism, it remains fundamentally incapable of capturing correlations among users, leading to fragmented routing and poor overall efficiency.


\subsubsection*{3) Varying Fixed Node Degree (Nodes = 50, Users = 12)}

In addition to average connectivity, we evaluate scenarios with fixed node degrees of 3, 4, 5, and 6. As shown in Fig.~\ref{fig:cost_fixdeg}, our method consistently achieves competitive performance, with routing cost comparable to BCO and significantly lower than GAT and shortest path. It remains slightly behind GA and the DP reference, which continue to yield the lowest cost. These results confirm that our method maintains its effectiveness even under strict topological constraints. The shortest path approach again performs the worst despite the inclusion of a node reuse enhancement. Its inability to jointly optimize routes across multiple users results in inefficient path construction, especially in topologies with constrained degrees.

\begin{figure}[!t]
    \centering
    \includegraphics[width=0.90\linewidth]{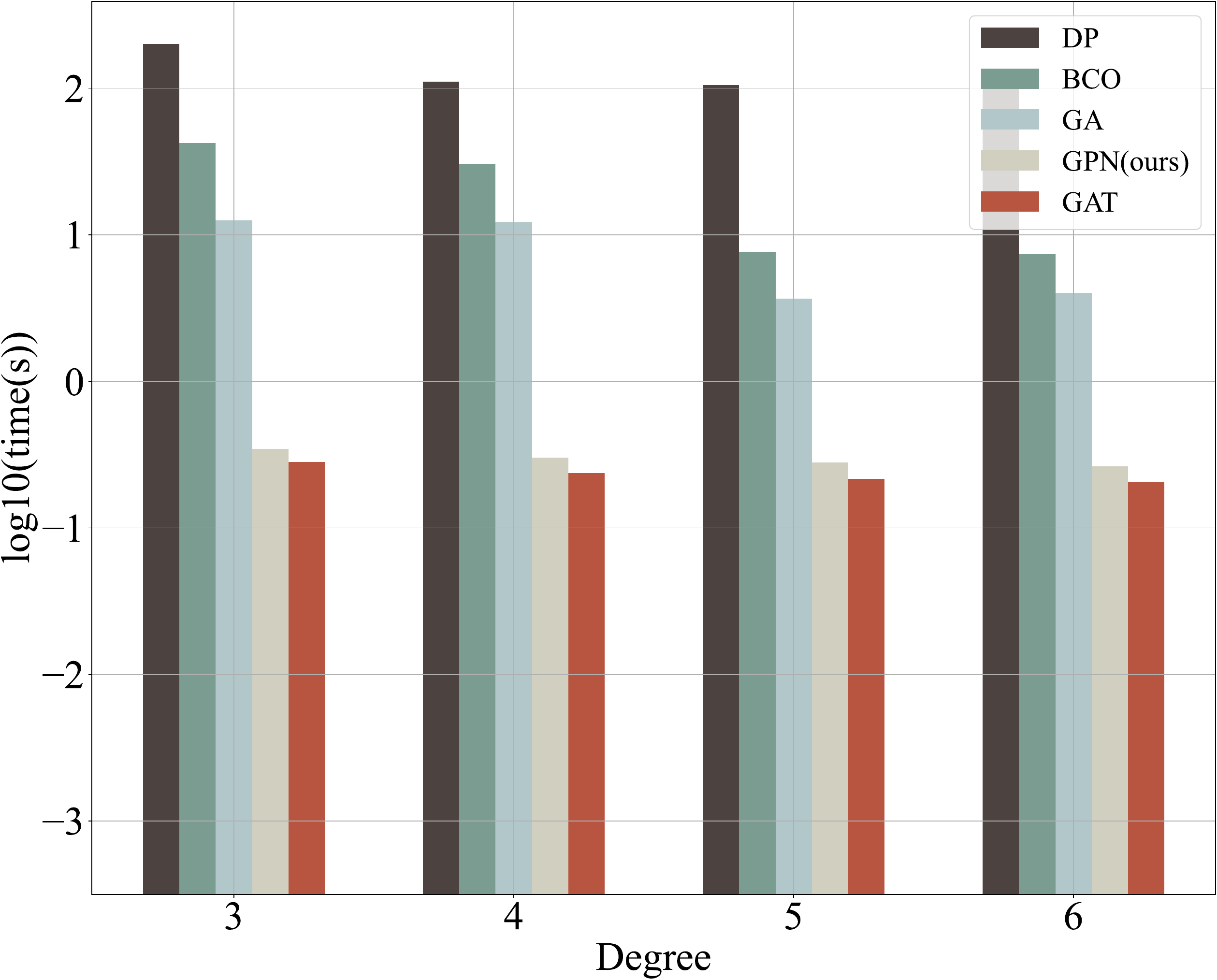}
    \caption{Total routing cost vs. fixed node degree (nodes = 50, users = 12).}
    \label{fig:cost_fixdeg}
\end{figure}


\subsubsection*{4) Small Number of Users (1 to 6)}
To evaluate the impact of user count on routing performance under light multicast demand, we test the system with a fixed network size of 50 nodes and vary the number of users from 1 to 6. As illustrated in Fig.~\ref{fig:scalability_few_users}, the total routing cost increases with the number of users for all methods. Notably, this increase is not strictly linear, which can be attributed to the heterogeneous service demands of the added users. Since each new user may request a different video quality, the extent to which their demand can be satisfied through flow reuse varies. This leads to fluctuations in the marginal increase of transmission cost, as observed in the figure. Our proposed method (GPN) consistently achieves performance close to the theoretical optimum (DP), with minimal deviation, and remains competitive with heuristic approaches such as GA and BCO. In contrast, GAT and shortest path routing incur substantially higher costs, underscoring their limited capability to coordinate multicast flows under differentiated low-demand conditions. These results demonstrate that even with a small number of users, GPN can effectively construct near-optimal multicast routes that efficiently leverage shared paths.

\begin{figure}[!t]
    \centering
    \includegraphics[width=0.90\linewidth]{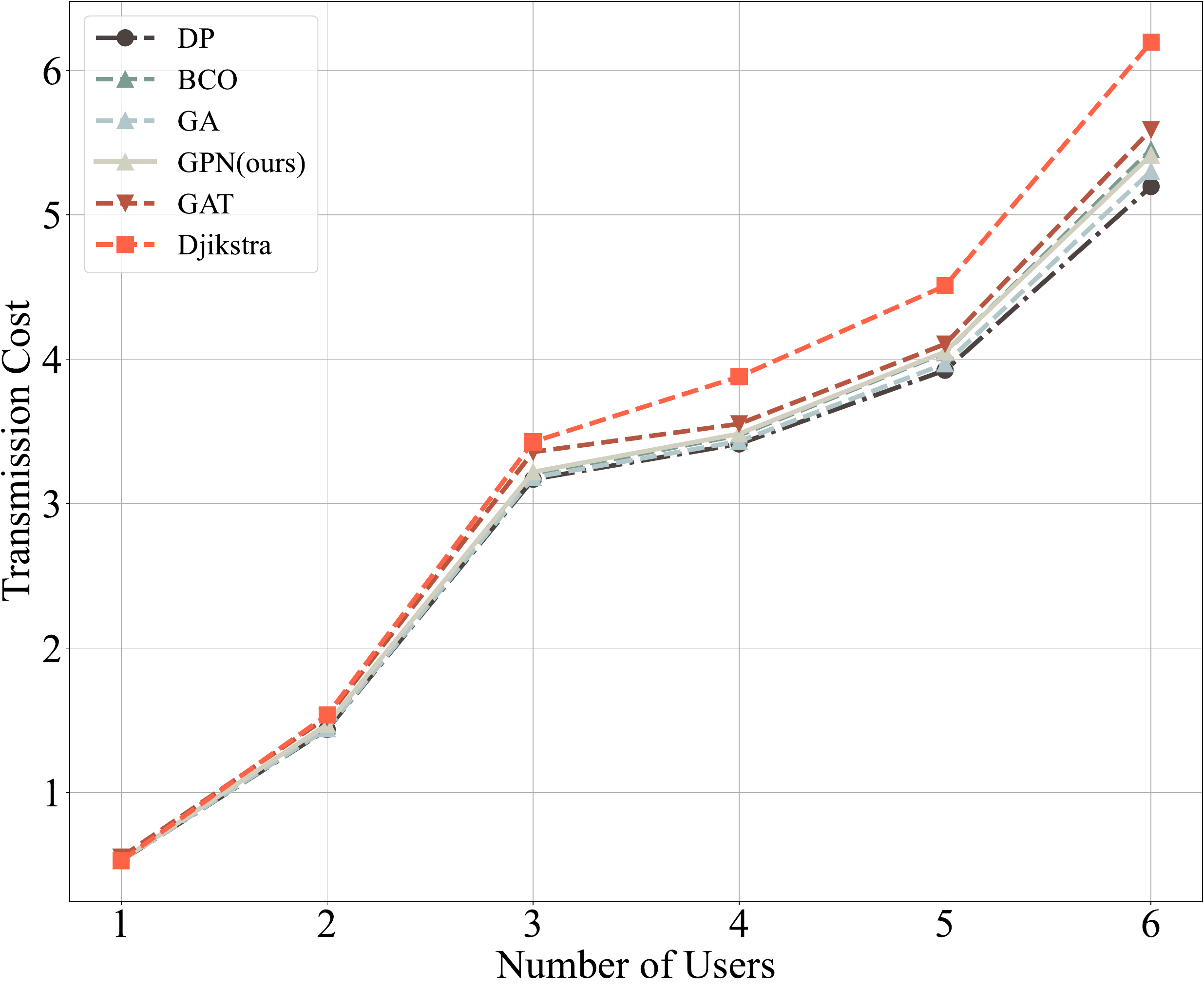}
    \caption{Routing cost vs. number of users (1–6), nodes = 50.}
    \label{fig:scalability_few_users}
\end{figure}


\subsubsection*{5) Large Number of Users (6 to 15, step = 3)}

We further extend the evaluation to scenarios with heavier multicast demands. The number of users is varied from 6 to 15 in steps of 3, under the same 50-node network. The results, shown in Fig.~\ref{fig:scalability_many_users}, demonstrate that GPN continues to scale effectively. While routing cost naturally increases with more destinations, the relative performance trends among algorithms remain consistent with previous settings. Specifically, GPN achieves routing cost close to BCO, and slightly higher than GA and the optimal DP baseline. In contrast, GAT and shortest path methods exhibit significantly steeper cost increases, reflecting their limited ability to aggregate delivery paths. These findings confirm that GPN provides a strong balance between efficiency and performance, making it well-suited for dynamic and large-scale multicast scenarios.

\begin{figure}[!t]
    \centering
    \includegraphics[width=0.90\linewidth]{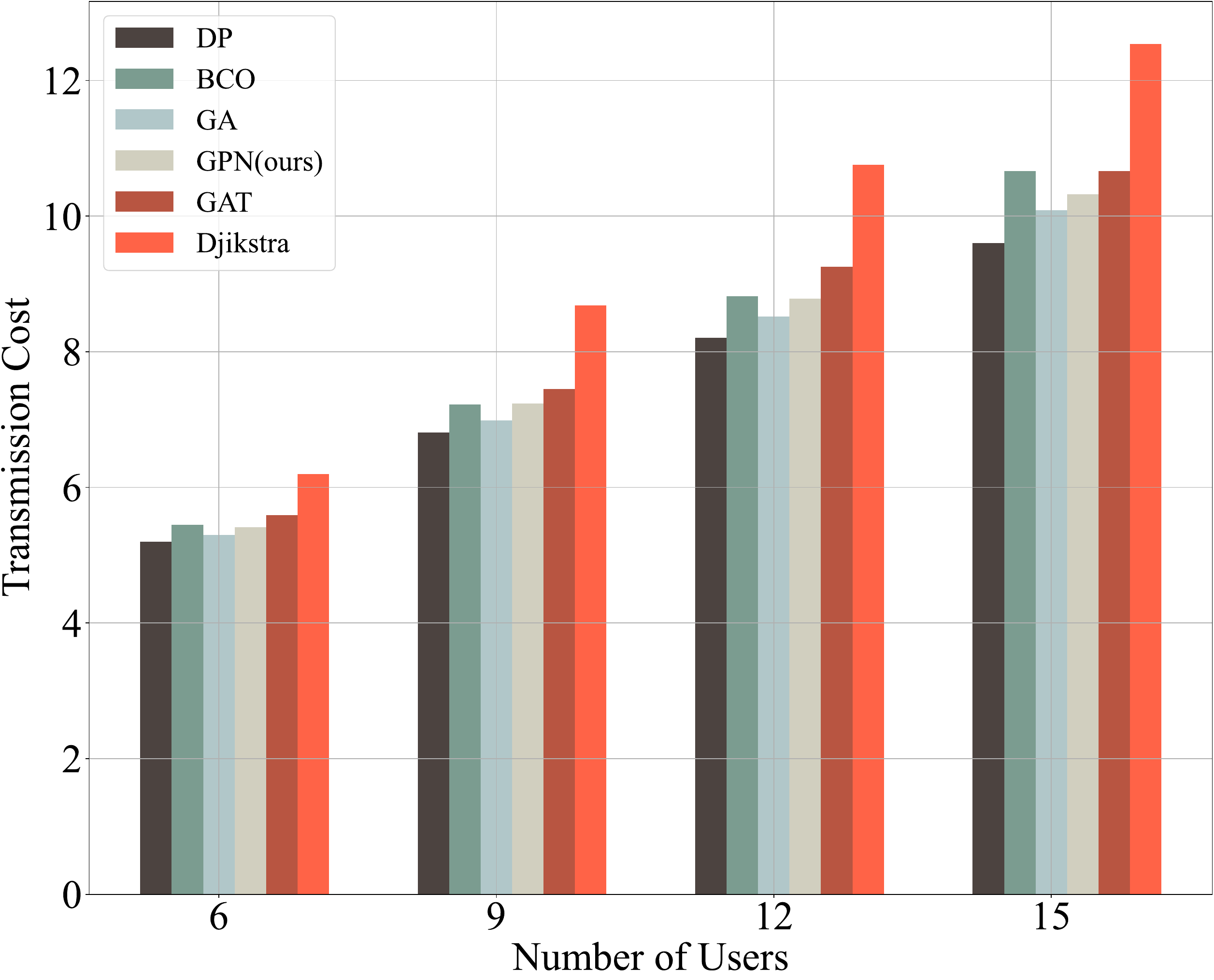}
    \caption{Routing cost vs. number of users (6–15), nodes = 50.}
    \label{fig:scalability_many_users}
\end{figure}


\subsection{Statistical Comparison under Fixed Conditions}

To provide a deeper statistical understanding of each algorithm’s performance under consistent network settings, we visualize the distribution of total routing costs using violin plots. These plots illustrate not only the mean and variance, but also the density of results over 20 random graph instances, revealing the stability and reliability of each method.
\begin{figure*}[!t]
    \centering
    \includegraphics[width=0.9\linewidth]{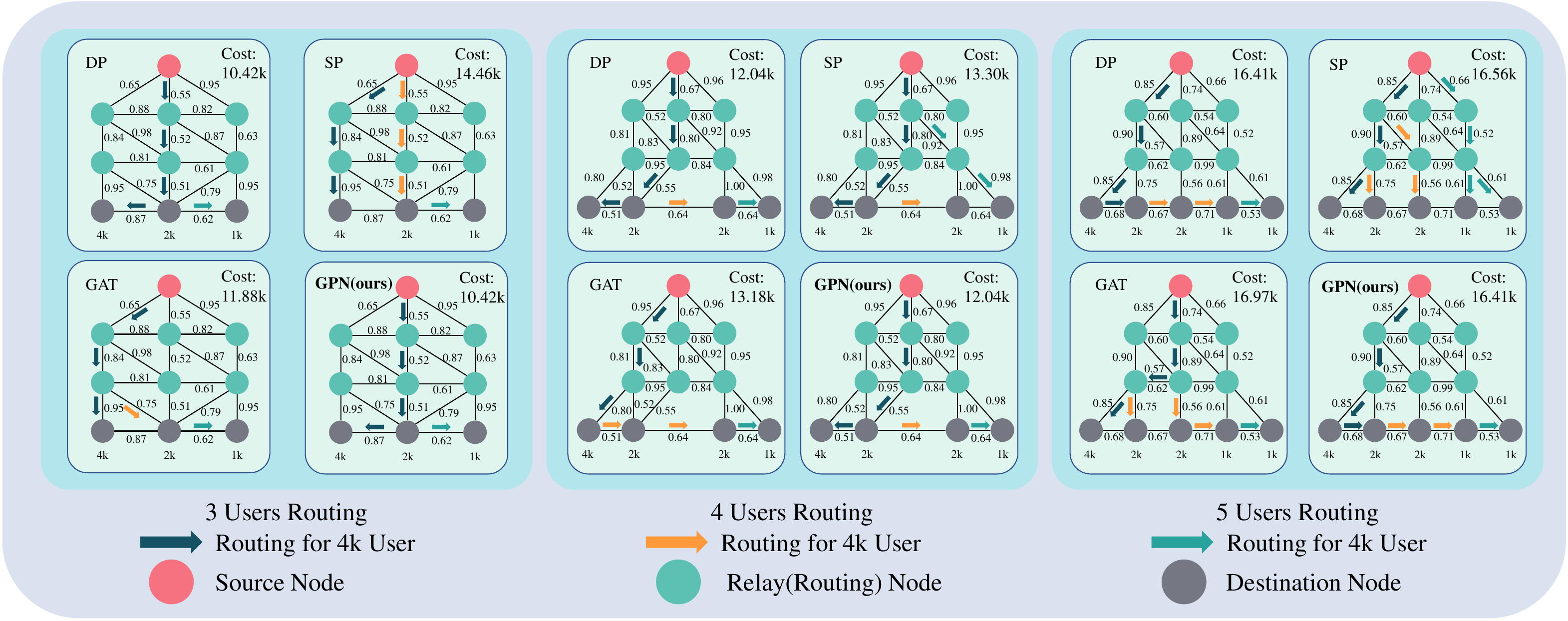}
    \caption{Routing path comparison under a fixed 50-node topology with 3, 4, and 5 users. Each column corresponds to a user configuration; each row shows the result of a different algorithm.}
    \label{fig:path_compare_smallusers}
\end{figure*}
\subsubsection*{1) Scenario: 30 Nodes, 12 Users}
As shown in Fig.~\ref{fig:violin_30n_12u}, GPN demonstrates a compact and low-cost distribution, with mean performance comparable to GA and BCO, both of which also show strong results. In contrast, the shortest path method produces much higher average cost and greater variability, as it independently constructs routes for each user without considering the overall multicast structure. The dynamic programming solution achieves the best performance overall, with the most concentrated distribution and the lowest mean cost, serving as a theoretical optimum for reference. GPN’s results closely follow those of DP, confirming its near-optimal behavior even in smaller topologies.

\begin{figure}[!t]
    \centering
    \includegraphics[width=0.90\linewidth]{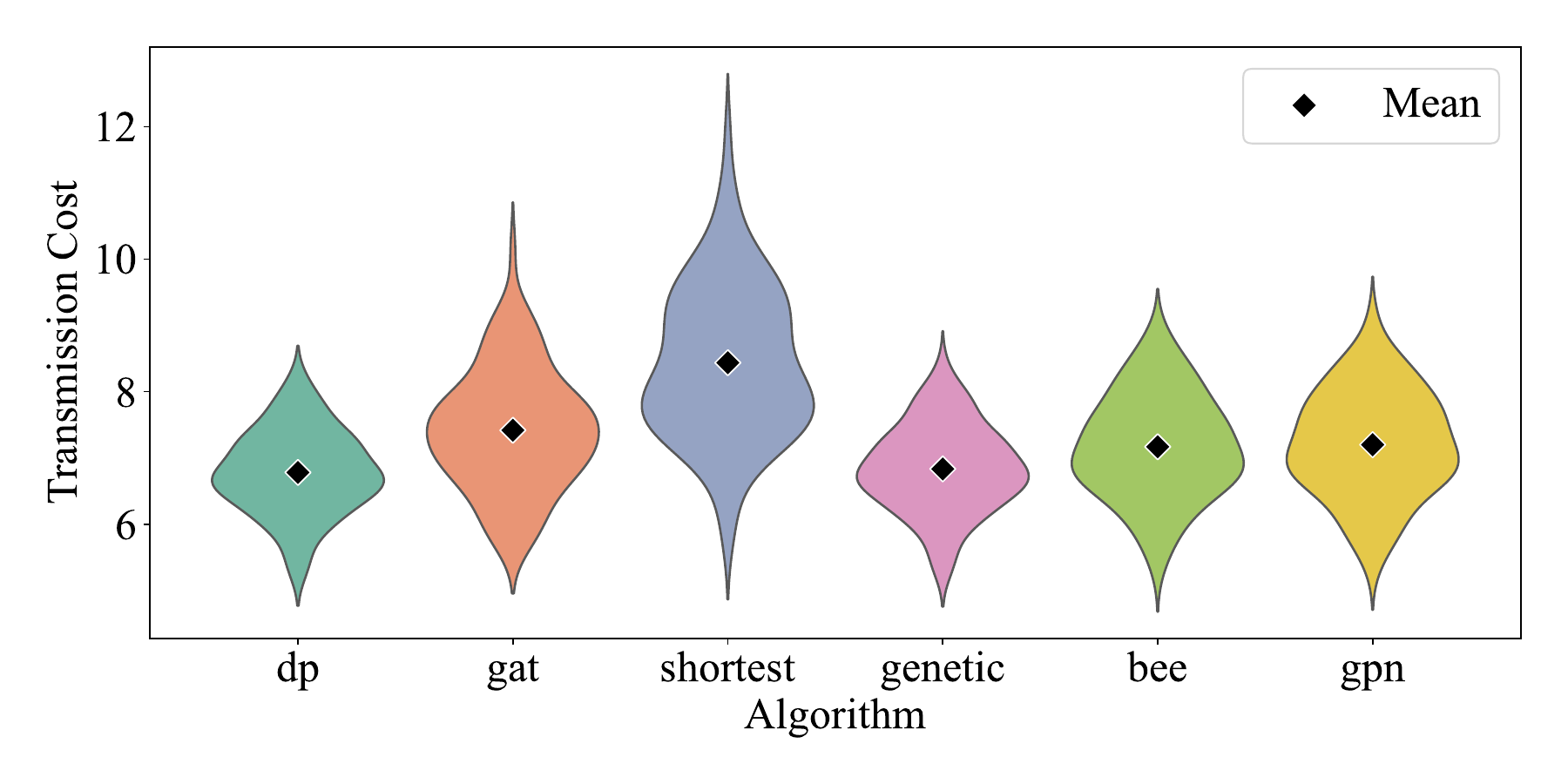}
    \caption{Routing cost distribution under 30 nodes and 12 users.}
    \label{fig:violin_30n_12u}
\end{figure}

\subsubsection*{2) Scenario: 50 Nodes, 12 Users}
In Fig.~\ref{fig:violin_50n_12u}, GPN maintains a tight and low-cost distribution in larger network settings. It performs comparably to GA and BCO, both of which continue to exhibit good routing efficiency and stable outcomes. Compared to GAT, GPN shows a more concentrated distribution and slightly lower average cost, highlighting its robustness and consistent behavior. Again, the DP baseline achieves the lowest cost with the smallest variance, and GPN remains very close in both average performance and spread.

\begin{figure}[!t]
    \centering
    \includegraphics[width=0.90\linewidth]{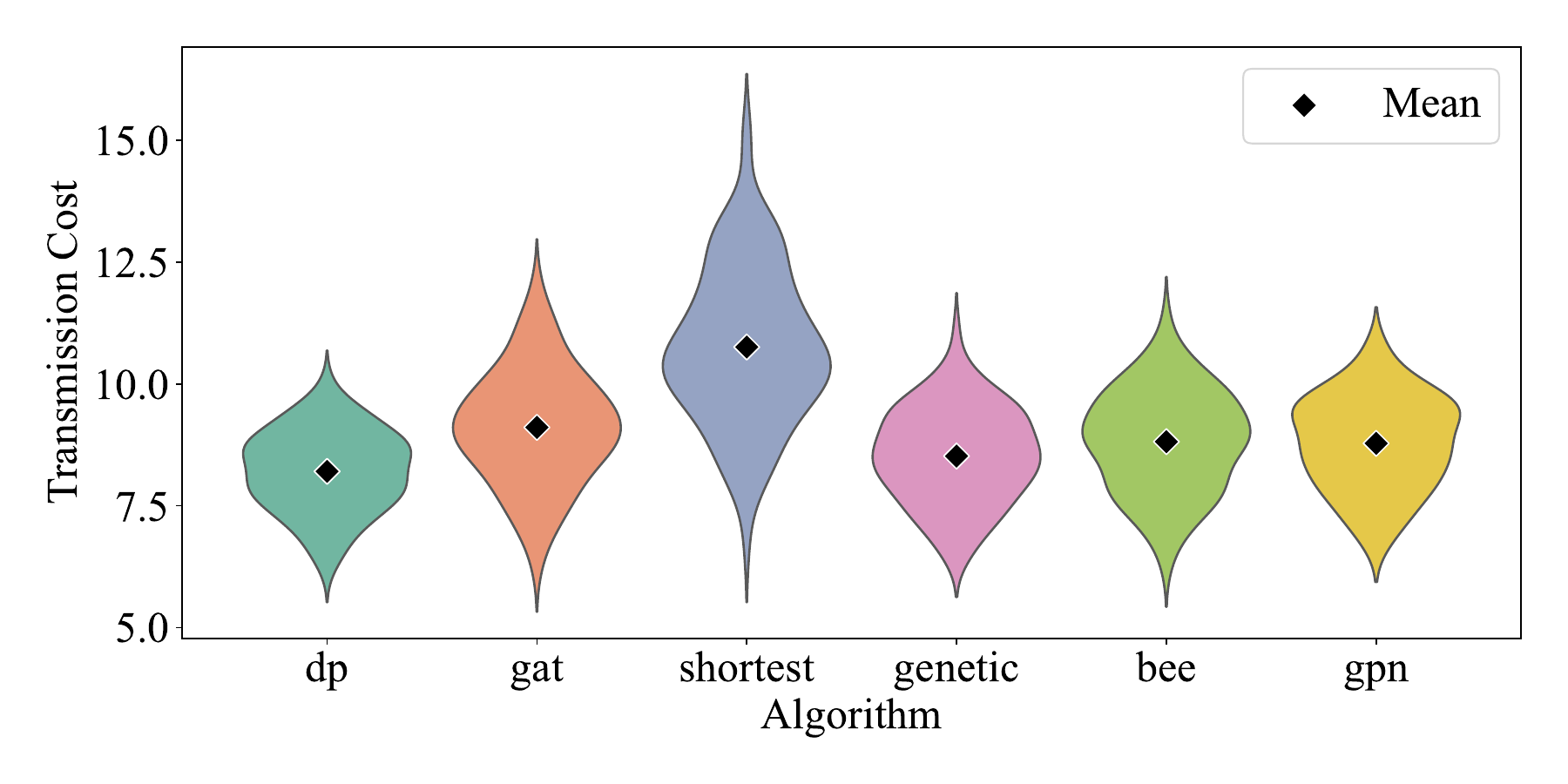}
    \caption{Routing cost distribution under 50 nodes and 12 users.}
    \label{fig:violin_50n_12u}
\end{figure}


Under higher user demand, Fig.~\ref{fig:violin_50n_15u} shows that GPN maintains strong performance. Although the routing cost increases for all methods, GPN still achieves low mean cost and compact variance, comparable to GA and BCO. The shortest path method continues to perform the worst, showing both higher mean cost and broader spread. The DP baseline remains the most efficient and stable, and GPN’s distribution again aligns closely with it, demonstrating the scalability and consistency of our approach even in complex scenarios.

\begin{figure}[!t]
    \centering
    \includegraphics[width=0.90\linewidth]{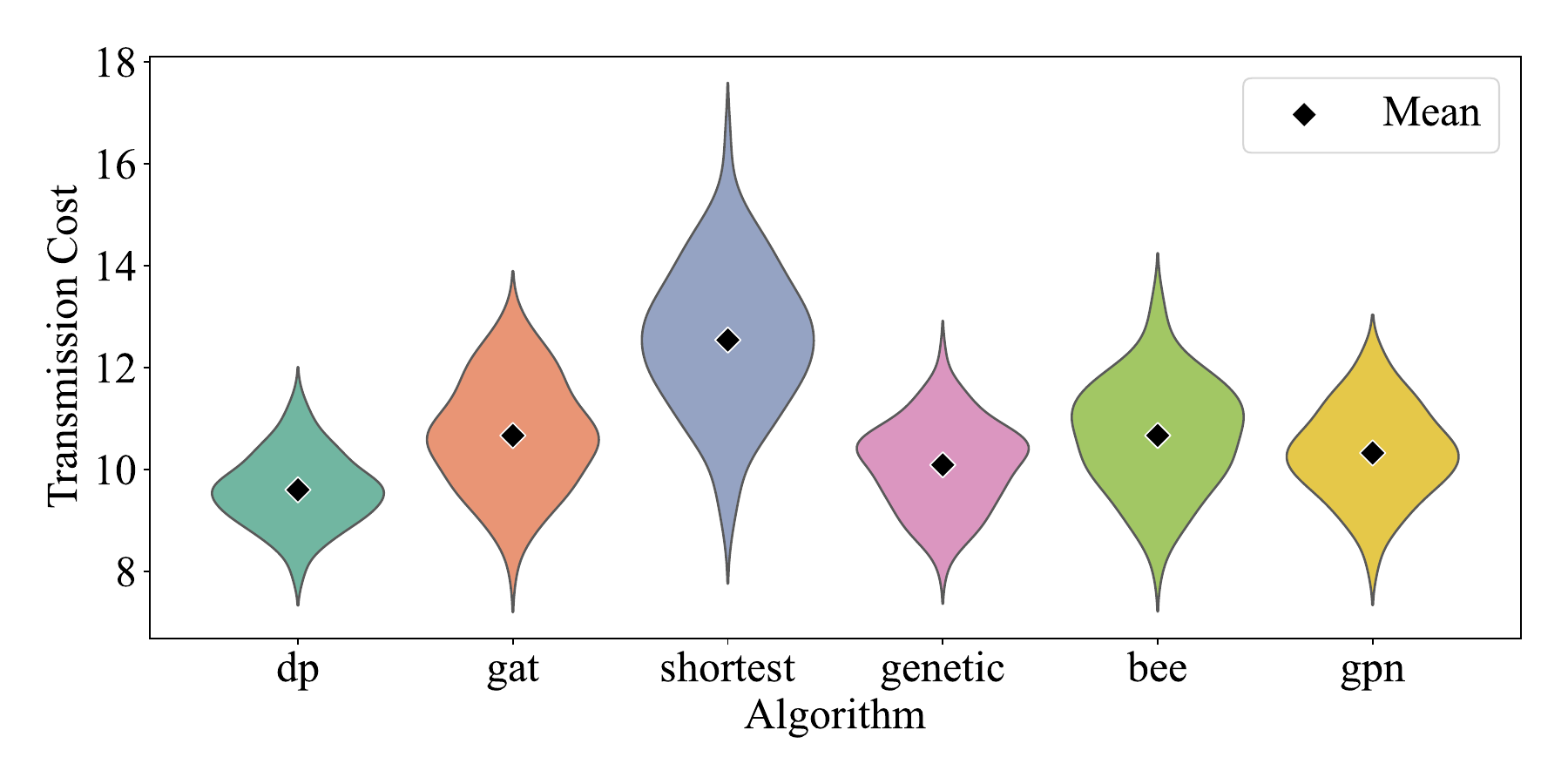}
    \caption{Routing cost distribution under 50 nodes and 15 users.}
    \label{fig:violin_50n_15u}
\end{figure}

To provide an intuitive comparison of the routing structures produced by different algorithms, we present a visual analysis based on several fixed small-scale networks. In this controlled setting, we manually select three representative multicast configurations with 3, 4, and 5 destination users, respectively. The underlying graph topology and user placements are carefully chosen to expose the strengths and limitations of each algorithm’s routing behavior in a visually interpretable way. For illustration purposes, the routing costs are generated with randomness to focus on comparing the overall structure, aggregation quality, and routing strategy exhibited by each method. As shown in Fig.~\ref{fig:path_compare_smallusers}, the shortest path algorithm consistently yields the highest routing cost across all configurations. Its disjoint and greedy unicast routing strategy lacks any consideration of shared structure or user correlation, resulting in extensive path redundancy. While the graph attention network demonstrates some capability for reusing intermediate nodes, it often fails to discover globally optimal or compact routes due to limited ability to fully capture graph-wide dependencies. Its routing structures appear fragmented and inefficient, especially as the number of users increases.

In contrast, our proposed GPN model exhibits robust and coherent routing patterns that are both well-aggregated and cost-effective. Across all three multicast settings, GPN closely follows the routing structures produced by dynamic programming, which serves as the theoretical optimum. GPN is able to implicitly model user relationships and topological regularities, allowing it to construct efficient routes with minimal overhead, even without explicitly enumerating optimal paths as DP does. This visual comparison further validates the effectiveness of GPN in generating centralized, reusable multicast paths that scale well with user demand, offering a practical balance between structural quality and computational efficiency.

\subsection{Time Consumption Comparison}

Inference time is a critical metric for evaluating the practical deployment of multicast routing algorithms, especially in real-time systems. In this section, we compare the execution time of all methods under varying numbers of users, while keeping the network size fixed at 50 nodes. Since the time differences between algorithms span several orders of magnitude, we report the results using the \textbf{logarithm (base 10)} of execution time.

As shown in Fig.~\ref{fig:time_users_all}, the proposed GPN model achieves consistently low inference time across all user counts. From 1 to 6 users, GPN, GAT, and shortest path routing maintain sub-second execution times, with only minor increases. For 9, 12, and 15 users, GPN continues to scale effectively, while heuristic methods such as GA, BCO, and DP exhibit rapidly increasing runtime due to their high algorithmic complexity.

Among all methods, GPN offers the best trade-off between routing quality and computational cost. It is significantly faster than GA, BCO, and DP, and achieves even lower routing cost than GAT and shortest path methods. These results confirm the suitability of our method for real-time, large-scale multicast routing in next-generation networks.

\begin{figure}[!t]
    \centering
    \includegraphics[width=0.8\linewidth]{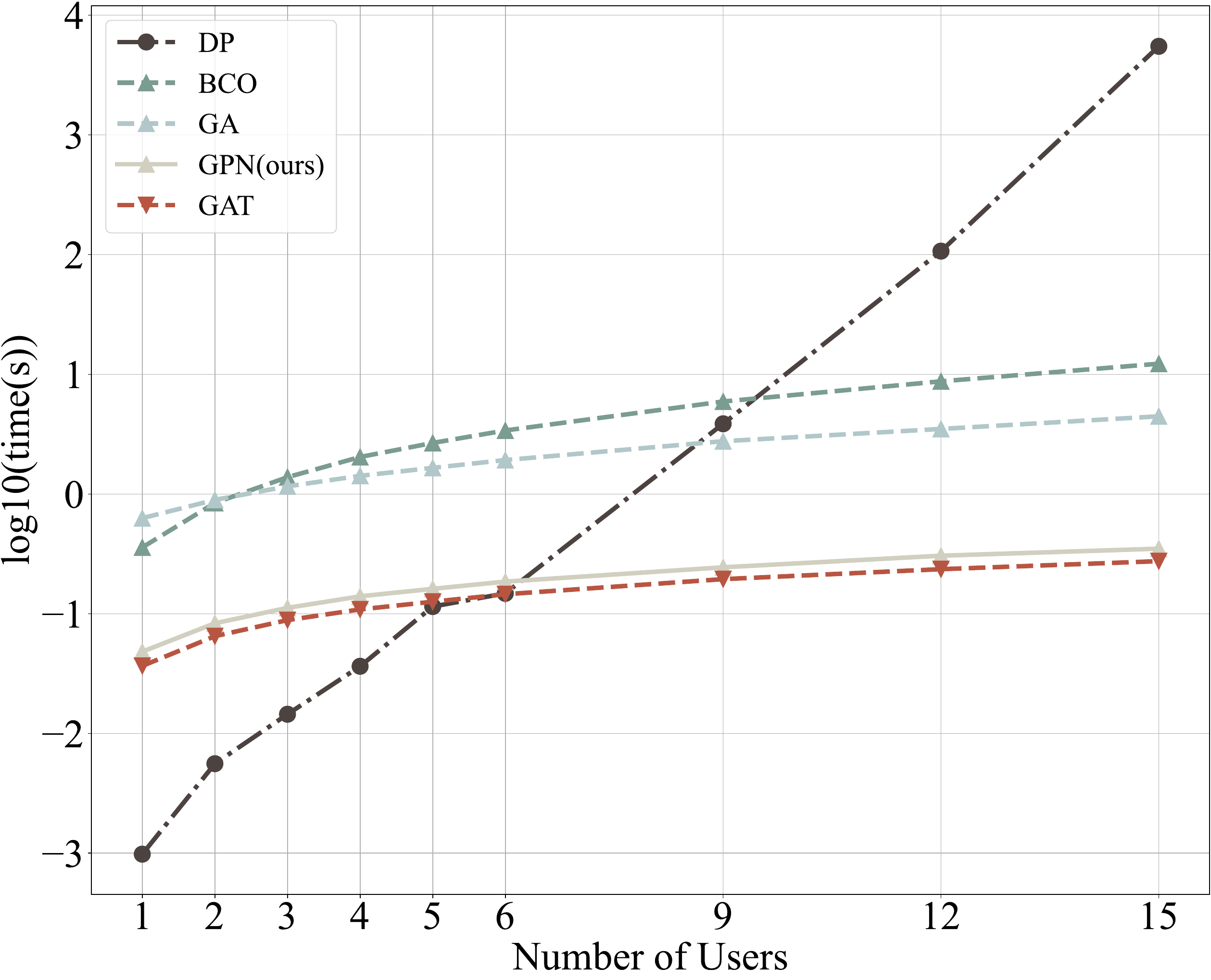}
    \caption{Log$_{10}$ execution time vs. number of users (1–6 and 9–15), nodes = 50.}
    \label{fig:time_users_all}
\end{figure}

\subsection{Incremental Routing with Dynamic User Addition}

\begin{figure}[!t]
    \centering
    \includegraphics[width=0.8\linewidth]{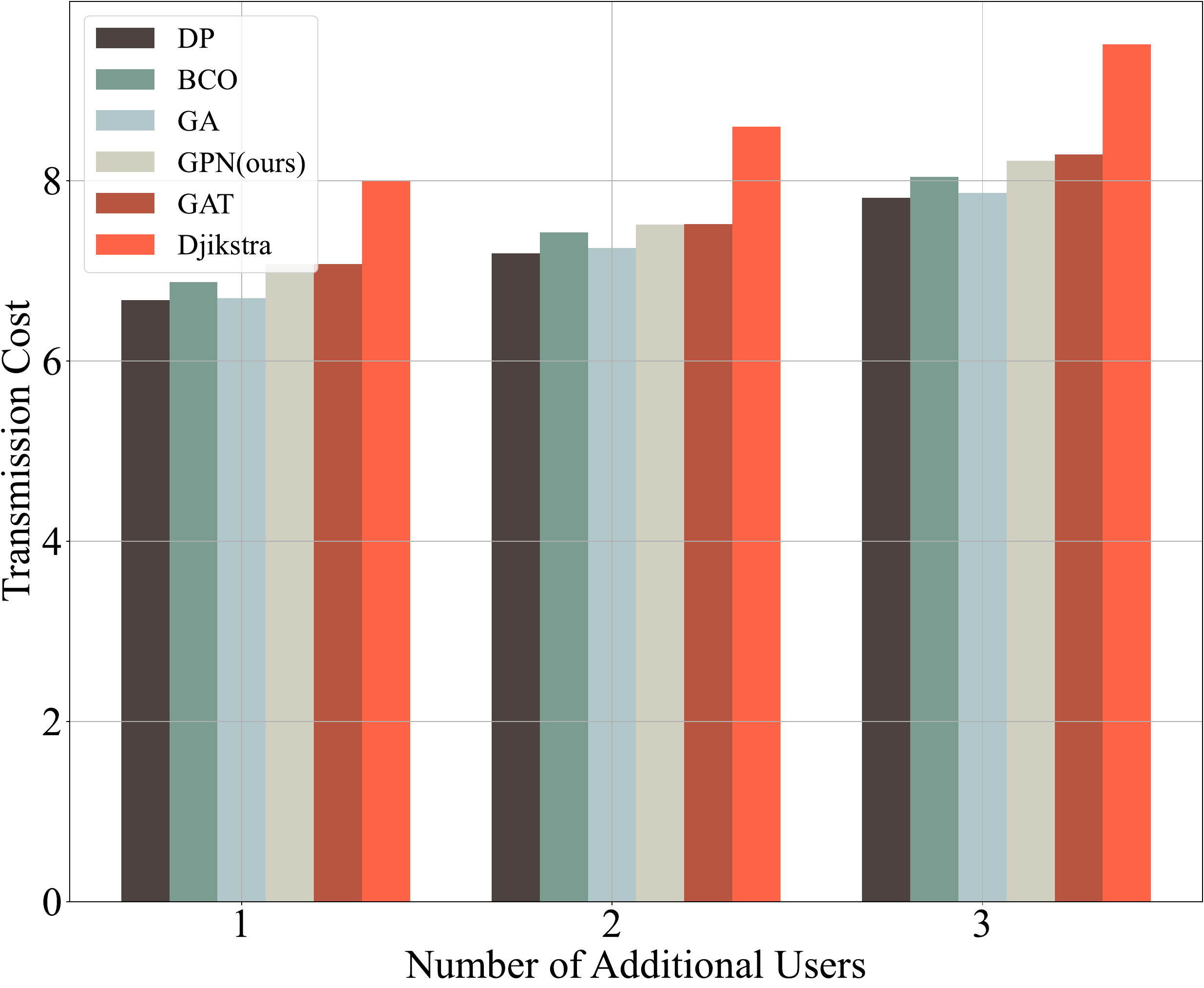}
    \caption{Routing cost under dynamic user addition (initial: 9 users, nodes = 50).}
    \label{fig:incremental_cost}
\end{figure}


\begin{figure}[!t]
    \centering
    \includegraphics[width=0.8\linewidth]{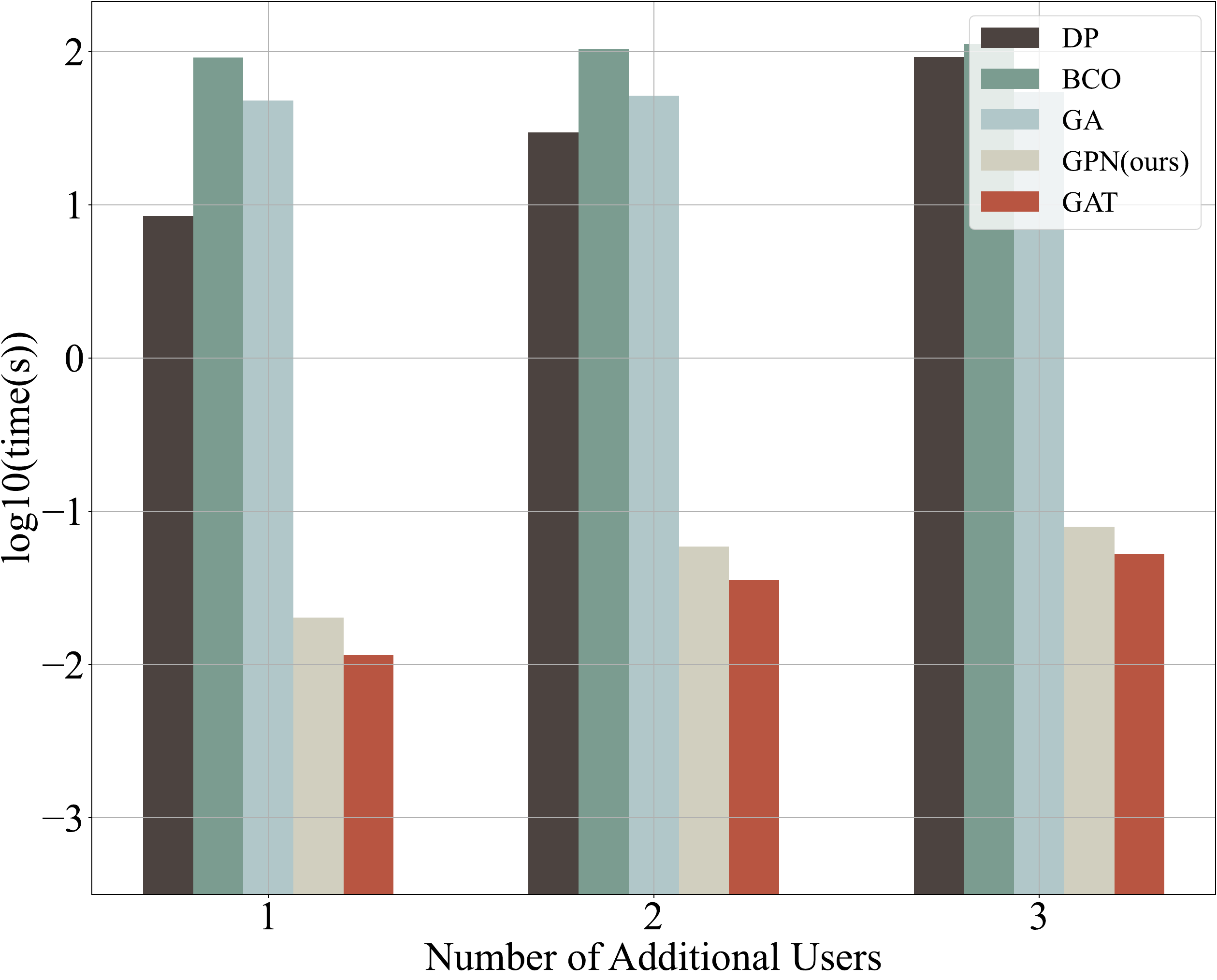}
    \caption{Log$_{10}$ execution time under dynamic user addition (initial: 9 users, nodes = 50).}
    \label{fig:incremental_time}
\end{figure}


In practical multicast scenarios, user demands may arrive sequentially or in bursts. Traditional algorithms, such as GA, BCO, and even shortest path routing, must recompute the entire routing structure from scratch whenever the destination set changes, resulting in significant latency and computational overhead. In contrast, our proposed GPN model supports efficient \textbf{incremental rerouting}: when new users are added, the model can rapidly integrate them into the existing delivery structure by intelligently reusing and merging previously optimized paths. This design enables low-latency updates while avoiding redundant computation. Although graph attention networks can also support similar sequential routing mechanisms, our experiments reveal that GAT incurs slightly higher routing cost in the incremental setting, indicating its limited adaptability to dynamic user updates.

To evaluate this property, we simulate dynamic user addition on a 50-node graph. Starting from an initial routing plan for 9 users, we incrementally add 1, 2, and 3 new users and compare the results across multiple algorithms. Fig.~\ref{fig:incremental_cost} and Fig.~\ref{fig:incremental_time} illustrate the resulting changes in routing cost and execution time, respectively, while Tables~\ref{tab:summarize} summarize the statistical results. These comparisons demonstrate that GPN not only achieves rapid adaptation to user-set changes but also maintains low transmission cost, outperforming both traditional heuristics and neural baselines such as GAT in terms of efficiency and scalability.

\subsection{\added{Ablation Study}}

\added{
We conduct an ablation experiment on a graph with 30 nodes and 12 users, where the full GPN model is compared with variants that replace the GAT encoder with a generic GNN, remove the LSTM decoder, or substitute the attention-based pointer with an MLP scorer. As shown in Fig.~\ref{fig:ablation}, replacing the GAT with a generic GNN causes a large increase in transmission loss, highlighting the critical role of multi-head attention in capturing graph structure. In contrast, removing the LSTM or replacing the attention-based pointer with an MLP only leads to minor performance drops, indicating that these modules have a smaller impact on overall effectiveness.
}
\begin{figure}[t]
    \centering
    \includegraphics[width=0.8\linewidth]{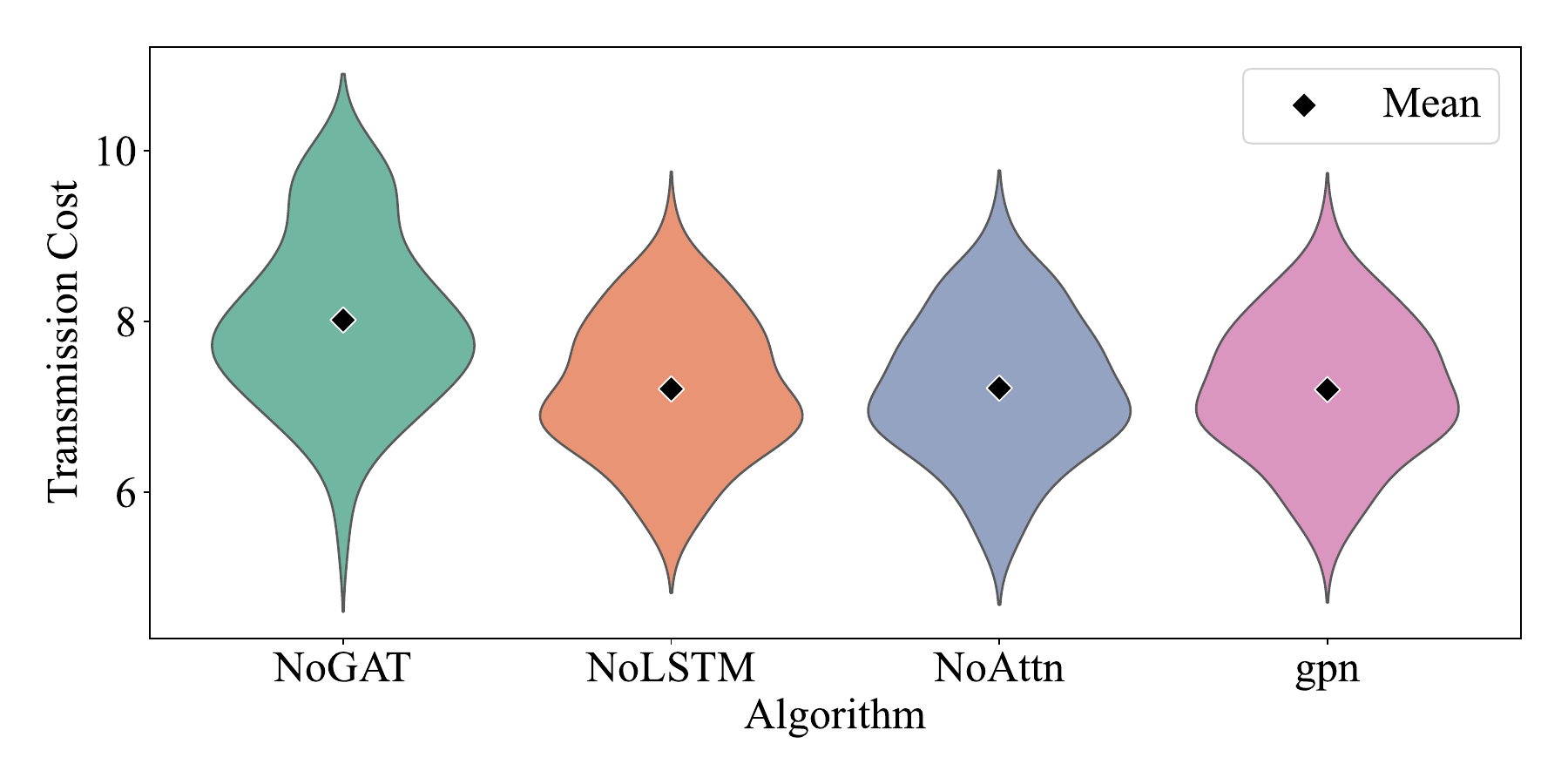}
    \caption{\added{Ablation study on the 30-node, 12-user case.}}
    \label{fig:ablation}
\end{figure}

\section{Conclusion}
In this paper, we have presented a robust and scalable multicast routing framework leveraging graph neural networks and reinforcement learning to address the critical challenges of heterogeneous QoS demands, network dynamism, and scalability limitations inherent in traditional multicast routing methods. Comprehensive experiments underscore the superior performance in minimizing transmission costs and enhancing computational efficiency for real-time deployment of our approach, which is well-suited for future multimedia streaming applications within the evolving landscape of 6G networks. In future work, we will explore how to extend our proposed method to address multi-source multicast and coordinated scheduling of multi-resolution video streams.

\bibliography{ref}

\begin{thebibliography}{10}
\providecommand{\url}[1]{#1}
\csname url@samestyle\endcsname
\providecommand{\newblock}{\relax}
\providecommand{\bibinfo}[2]{#2}
\providecommand{\BIBentrySTDinterwordspacing}{\spaceskip=0pt\relax}
\providecommand{\BIBentryALTinterwordstretchfactor}{4}
\providecommand{\BIBentryALTinterwordspacing}{\spaceskip=\fontdimen2\font plus
\BIBentryALTinterwordstretchfactor\fontdimen3\font minus \fontdimen4\font\relax}
\providecommand{\BIBforeignlanguage}[2]{{%
\expandafter\ifx\csname l@#1\endcsname\relax
\typeout{** WARNING: IEEEtran.bst: No hyphenation pattern has been}%
\typeout{** loaded for the language `#1'. Using the pattern for}%
\typeout{** the default language instead.}%
\else
\language=\csname l@#1\endcsname
\fi
#2}}
\providecommand{\BIBdecl}{\relax}
\BIBdecl

\bibitem{Menglan}
M.~Hu, R.~Yang, Y.~Hu, C.~Cai, Y.~Dong, T.~Deng, and K.~Peng, ``Qos-aware software-defined multicast in leo satellite networks,'' \emph{IEEE Trans. Aerosp. Electron. Syst.}, vol.~58, no.~6, pp. 5307--5317, 2022.

\bibitem{DBLP:journals/corr/abs-1906-00741}
\BIBentryALTinterwordspacing
S.~Dang, O.~Amin, B.~Shihada, and M.~Alouini, ``From a human-centric perspective: What might 6g be?'' \emph{CoRR}, vol. abs/1906.00741, 2019. [Online]. Available: \url{http://arxiv.org/abs/1906.00741}
\BIBentrySTDinterwordspacing

\bibitem{Shen}
\BIBentryALTinterwordspacing
X.~S. Shen, J.~Gao, M.~Li, C.~Zhou, S.~Hu, M.~He, and W.~Zhuang, ``Toward immersive communications in 6g,'' \emph{Front. Comput. Sci.}, vol. Volume 4 - 2022, 2023. [Online]. Available: \url{https://www.frontiersin.org/journals/computer-science/articles/10.3389/fcomp.2022.1068478}
\BIBentrySTDinterwordspacing

\bibitem{YHE}
\BIBentryALTinterwordspacing
Y.~He, J.~Zhang, P.~Yang, Z.~Sun, and X.~Shen, ``{ PPBR: Privacy-Preserving and Byzantine-Robust Edge-Assisted Hierarchical Federated Learning in Mobile Networks },'' \emph{IEEE Trans. Mob. Comput.}, no.~01, pp. 1--17, Aug. 5555. [Online]. Available: \url{https://doi.ieeecomputersociety.org/10.1109/TMC.2025.3602911}
\BIBentrySTDinterwordspacing

\bibitem{11017513}
R.~Sun, N.~Cheng, C.~Li, W.~Quan, H.~Zhou, Y.~Wang, W.~Zhang, and X.~Shen, ``A comprehensive survey of knowledge-driven deep learning for intelligent wireless network optimization in 6g,'' \emph{IEEE Communications Surveys \& Tutorials}, pp. 1--1, 2025.

\bibitem{radiodiff}
X.~Wang, K.~Tao, N.~Cheng, Z.~Yin, Z.~Li, Y.~Zhang, and X.~Shen, ``Radiodiff: An effective generative diffusion model for sampling-free dynamic radio map construction,'' \emph{{IEEE} Trans. Cogn. Commun. Netw.}, vol.~11, no.~2, pp. 738--750, 2025.

\bibitem{Chukhno}
N.~Chukhno, O.~Chukhno, S.~Pizzi, A.~Molinaro, A.~Iera, and G.~Araniti, ``Approaching 6g use case requirements with multicasting,'' \emph{{IEEE} Commun. Mag.}, vol.~61, no.~5, pp. 144--150, 2023.

\bibitem{Chen2004Multipath}
J.~Chen, S.~Chan, and V.~Li, ``Multipath routing for video delivery over bandwidth-limited networks,'' \emph{{IEEE} J. Sel. Areas Commun.}, vol.~22, pp. 1920--1932, 2004.

\bibitem{latreche2025applicationsenvisagednewgeneration}
\BIBentryALTinterwordspacing
S.~Latreche, H.~Bellahsene, and A.~Taleb-Ahmed, ``Some applications envisaged for the new generation of communications networks 6g,'' 2025. [Online]. Available: \url{https://arxiv.org/abs/2501.14117}
\BIBentrySTDinterwordspacing

\bibitem{Messmer2010Real-Time}
R.~Messmer and J.~Keller, ``Real-time fault-tolerant routing in high-availability multicast-aware video networks,'' pp. 49--60, 2010.

\bibitem{Chen2012On}
Y.-R. Chen, S.~Radhakrishnan, S.~K. Dhall, and S.~Karabuk, ``On multi-stream multi-source multicast routing,'' 2012.

\bibitem{Chi2018Live}
T.-H. Chi, C.-H. Lin, J.-J. Kuo, and W.-T. Chen, ``Live video multicast for dynamic users via segment routing in 5g networks,'' \emph{2018 IEEE Global Communications Conference (GLOBECOM)}, pp. 1--7, 2018.

\bibitem{YHE2}
Y.~He, M.~Wang, X.~Deng, P.~Yang, Q.~Xue, and L.~T. Yang, ``Personalized local differential privacy for multi-dimensional range queries over mobile user data,'' \emph{IEEE Trans. Mob. Comput.}, pp. 1--15, 2025.

\bibitem{Li2020Jointly}
S.~Li, X.~Tan, S.~Wang, J.~Yang, and Q.~Zheng, ``Jointly video bitrate adaptation and multicast resource allocation in mobile edge networks,'' \emph{2020 16th International Conference on Mobility, Sensing and Networking (MSN)}, pp. 252--259, 2020.

\bibitem{Zou2011Prioritized}
J.~Zou, H.~Xiong, C.~Li, L.~Song, Z.~He, and T.~Chen, ``Prioritized flow optimization with multi-path and network coding based routing for scalable multirate multicasting,'' \emph{{IEEE} Trans. Circuits Syst. Video Technol.}, vol.~21, pp. 259--273, 2011.

\bibitem{10387423}
R.~Sun, N.~Cheng, C.~Li, F.~Chen, and W.~Chen, ``Knowledge-driven deep learning paradigms for wireless network optimization in 6g,'' \emph{IEEE Network}, vol.~38, no.~2, pp. 70--78, 2024.

\bibitem{Du2020Deep}
J.~Du, X.~Huang, F.~Wu, and S.~Leng, ``Deep learning empowered qos-aware adaptive routing algorithm in wireless networks,'' \emph{2020 IEEE 20th International Conference on Communication Technology (ICCT)}, pp. 97--101, 2020.

\bibitem{Langpoklakpam2023Review}
B.~Langpoklakpam and L.~K. Murry, ``Review on machine learning for intelligent routing, key requirement and challenges towards 6g,'' \emph{J. Comput. Netw. Commun.}, 2023.

\bibitem{Tang2024Is}
L.~Tang, P.~Krishnamurthy, and M.~Abdelhakim, ``Is machine learning the best option for network routing?'' in \emph{ICC 2024 - IEEE International Conference on Communications}, 2024, pp. 5425--5430.

\bibitem{Orucu2024Towards}
A.~Orucu, F.~Moradi, M.~Ebrahimi, and A.~Johnsson, ``Towards neural architecture search for transfer learning in 6g networks,'' \emph{ArXiv}, vol. abs/2406.02333, 2024.

\bibitem{Jiang2024Graph}
W.~Jiang, H.~Han, Y.~Zhang, J.~Wang, M.~He, W.~Gu, J.~Mu, and X.~Cheng, ``Graph neural networks for routing optimization: Challenges and opportunities,'' \emph{Sustain.}, 2024.

\bibitem{Galmés2021Scaling}
M.~F. Galmés, J.~Suárez-Varela, K.~Rusek, P.~Barlet-Ros, and A.~Cabellos-Aparicio, ``Scaling graph-based deep learning models to larger networks,'' \emph{ArXiv}, vol. abs/2110.01261, 2021.

\bibitem{Lent2022Dynamic}
R.~Lent, ``Dynamic routing in challenged networks with graph neural networks,'' in \emph{2022 IEEE Latin-American Conference on Communications (LATINCOM)}, 2022, pp. 1--6.

\bibitem{Liu2021Routing}
M.~Liu, J.~Li, and H.~Lu, ``Routing in small satellite networks: A gnn-based learning approach,'' \emph{ArXiv}, vol. abs/2108.08523, 2021.

\bibitem{Changala2024Enhancing}
R.~Changala, S.~S. Sneha~Sri, S.~Kadyan, S.~K. C, V.~A. Vuyyuru, and B.~K. Bala, ``Enhancing personalization in media streaming services through graph neural network,'' in \emph{2024 IEEE 16th International Conference on Computational Intelligence and Communication Networks (CICN)}, 2024, pp. 1208--1213.

\bibitem{Guliyev2024D3-GNN}
R.~Guliyev, A.~Haldar, and H.~Ferhatosmanoglu, ``D3-gnn: Dynamic distributed dataflow for streaming graph neural networks,'' \emph{ArXiv}, vol. abs/2409.09079, 2024.

\bibitem{Das2025Opportunistic}
S.~Das, N.~Naderializadeh, R.~Mangharam, and A.~Ribeiro, ``Opportunistic routing in wireless communications via learnable state-augmented policies,'' 2025.

\bibitem{Stachenfeld2020Graph}
K.~L. Stachenfeld, J.~Godwin, and P.~Battaglia, ``Graph networks with spectral message passing,'' \emph{ArXiv}, vol. abs/2101.00079, 2020.

\bibitem{Yang2021Improving}
M.~Yang, R.~Li, Y.~Shen, H.~Qi, and B.~Yin, ``Improving spectral graph convolution for learning graph-level representation,'' \emph{ArXiv}, vol. abs/2112.07160, 2021.

\bibitem{geisler2024spatiospectralgraphneuralnetworks}
\BIBentryALTinterwordspacing
S.~Geisler, A.~Kosmala, D.~Herbst, and S.~Günnemann, ``Spatio-spectral graph neural networks,'' 2024. [Online]. Available: \url{https://arxiv.org/abs/2405.19121}
\BIBentrySTDinterwordspacing

\bibitem{Olshevskyi2024Fully}
R.~Olshevskyi, Z.~Zhao, K.~Chan, G.~Verma, A.~Swami, and S.~Segarra, ``Fully distributed online training of graph neural networks in networked systems,'' \emph{ArXiv}, vol. abs/2412.06105, 2024.

\bibitem{veličković2018graphattentionnetworks}
\BIBentryALTinterwordspacing
P.~Veličković, G.~Cucurull, A.~Casanova, A.~Romero, P.~Liò, and Y.~Bengio, ``Graph attention networks,'' 2018. [Online]. Available: \url{https://arxiv.org/abs/1710.10903}
\BIBentrySTDinterwordspacing

\bibitem{Tshakwanda2024Advancing}
P.~M. Tshakwanda, S.~T. Arzo, and M.~Devetsikiotis, ``Advancing 6g network performance: Ai/ml framework for proactive management and dynamic optimal routing,'' \emph{{IEEE} open j. Comput. Soc.}, vol.~5, pp. 303--314, 2024.

\bibitem{Nahar2023SDCast}
A.~Nahar and D.~Das, ``Sdcast: A software-defined networking based clustered routing protocol for vehicular ad-hoc networks,'' \emph{Wirel. Pers. Commun.}, vol. 132, pp. 2457 -- 2485, 2023.

\bibitem{Ye2023DHRL-FNMR}
M.~Ye, C.~Zhao, X.~Xue, J.~Li, H.~Hu, Y.~Yang, and Q.~Jiang, ``Dhrl-fnmr: An intelligent multicast routing approach based on deep hierarchical reinforcement learning in sdn,'' \emph{ArXiv}, vol. abs/2305.19077, 2023.

\bibitem{Zhao2022DRL-M4MR}
C.~Zhao, M.~Ye, X.~Xue, J.~Lv, Q.~Jiang, and Y.~Wang, ``Drl-m4mr: An intelligent multicast routing approach based on dqn deep reinforcement learning in sdn,'' \emph{ArXiv}, vol. abs/2208.00383, 2022.

\bibitem{Lu2023Hawkeye}
L.~Lu, Q.~Li, D.~Zhao, Y.~Yang, Z.~Luan, J.~Zhou, Y.~Jiang, and M.~Xu, ``Hawkeye: A dynamic and stateless multicast mechanism with deep reinforcement learning,'' in \emph{IEEE INFOCOM 2023 - IEEE Conference on Computer Communications}, 2023, pp. 1--10.

\bibitem{Rathan2021Q-Learning}
K.~Rathan and S.~Roslin, ``Q-learning and madmm optimization algorithm based interference aware channel assignment strategy for load balancing in wmns,'' \emph{Int. J. Intell. Eng. Syst.}, 2021.

\bibitem{Zheng2012An}
Z.~Zheng, H.~Wang, and L.~Yao, ``An artificial bee colony optimization algorithm for multicast routing,'' in \emph{2012 14th International Conference on Advanced Communication Technology (ICACT)}, 2012, pp. 168--172.

\bibitem{sllame}
A.~M. Sllame, A.~AboJreeda, and M.~Hasaneen, ``Evaluating the fast rerouting with mpls networks as a fault tolerance mechanism with ospf and is-is routing protocols,'' in \emph{2022 IEEE 2nd International Maghreb Meeting of the Conference on Sciences and Techniques of Automatic Control and Computer Engineering (MI-STA)}, 2022, pp. 53--58.

\bibitem{liangbocheng}
B.~Liang, F.~Li, N.~Zheng, X.~Wang, and J.~Cao, ``Autosrv6: Configuration synthesis for segment routing over ipv6,'' \emph{IEEE J. Sel. Areas Commun.}, vol.~43, no.~2, pp. 473--483, Feb 2025.

\bibitem{figret_sigcomm24}
\BIBentryALTinterwordspacing
X.~Liu, S.~Zhao, Y.~Cui, and X.~Wang, ``Figret: Fine-grained robustness-enhanced traffic engineering,'' in \emph{Proceedings of the ACM SIGCOMM 2024 Conference}, ser. ACM SIGCOMM '24.\hskip 1em plus 0.5em minus 0.4em\relax New York, NY, USA: Association for Computing Machinery, 2024, p. 117–135. [Online]. Available: \url{https://doi.org/10.1145/3651890.3672258}
\BIBentrySTDinterwordspacing

\bibitem{10.1162/neco.1997.9.8.1735}
\BIBentryALTinterwordspacing
S.~Hochreiter and J.~Schmidhuber, ``Long short-term memory,'' \emph{Neural Comput.}, vol.~9, no.~8, pp. 1735--1780, 11 1997. [Online]. Available: \url{https://doi.org/10.1162/neco.1997.9.8.1735}
\BIBentrySTDinterwordspacing

\end{thebibliography}
\bibliographystyle{IEEEtran}

\ifCLASSOPTIONcaptionsoff
  \newpage
\fi

\end{document}